\newcommand*{\textcite}{\citet}
\newcommand*{\parencite}{\citep}
\definecolor{blue-violet}{rgb}{0.54, 0.17, 0.89}
\definecolor{antiquefuchsia}{rgb}{0.57, 0.36, 0.51}
\definecolor{amethyst}{rgb}{0.6, 0.4, 0.8}
\definecolor{blue-violet}{rgb}{0.54, 0.17, 0.89}
\definecolor{ao}{rgb}{0.0, 0.5, 0.0}
\definecolor{blue(ncs)}{rgb}{0.0, 0.53, 0.74}
\definecolor{dgreen}{rgb}{0.12, 0.3, 0.17}
\definecolor{cadmiumgreen}{rgb}{0.0, 0.42, 0.24}
\definecolor{darkolivegreen}{rgb}{0.33, 0.42, 0.18}
\definecolor{dartmouthgreen}{rgb}{0.05, 0.5, 0.06}
\newcommand{\tild}{\raise.17ex\hbox{ $\scriptstyle\sim$ }}
\newcommand{\indep}{\mathrel{\text{\scalebox{1.07}{$\perp\mkern-10mu\perp$}}}}
\theoremstyle{plain}
\newtheorem{theorem}{Theorem}
\newtheorem{lemma}{Lemma}
\newtheorem{corollary}{Corollary}
\theoremstyle{remark}
\newtheorem{definition}{Definition}
\newtheorem{remark}{Remark}
\DeclareMathOperator{\gG}{\mathscr{G}}
\DeclareMathOperator{\gF}{\mathscr{F}}
\DeclareMathOperator{\sV}{\mathrm{V}}
\DeclareMathOperator{\sU}{\mathrm{U}}
\DeclareMathOperator{\sD}{\mathcal{D}}
\DeclareMathOperator{\sB}{\mathcal{B}}
\DeclareMathOperator{\sP}{\mathcal{P}}
\DeclareMathOperator{\sW}{\mathcal{W}}
\DeclareMathOperator{\an}{\mathrm{an}}
\DeclareMathOperator{\de}{\mathrm{de}}
\DeclareMathOperator{\pa}{\mathrm{pa}}
\DeclareMathOperator{\ch}{\mathrm{ch}}
\newcommand{\barrow}{\begin{tikzpicture}[scale=.3, baseline=-3mm, transform shape, >=stealth]
\node (a) at (0, -0.6) {};
\node (b) at (2.5, -0.6) {};
\draw [<->,thick] (a) -- (b);
\end{tikzpicture}}
\newcommand{\dbarrow}{\begin{tikzpicture}[scale=.3, baseline=-3mm, transform shape, >=stealth]
\node (a) at (0, -0.6) {};
\node (b) at (2.5, -0.6) {};
\draw [<->,dashed,thick] (a) -- (b);
\end{tikzpicture}}
\renewcommand{\rightsquigarrow}{\nosquigfull}
\renewcommand{\leftrightsquigarrow}{\fullsquigfull}
\newcommand{\leftsquigarrow}{\fullsquigno}
\let \tilde \widetilde
\begin{document}

\begin{frontmatter}
\title{Confounder Selection via Iterative Graph Expansion}
\runtitle{Confounder Selection via Iterative Graph Expansion}

\begin{aug}
\author[A]{\fnms{F.~Richard}~\snm{Guo}\ead[label=e1]{ricguo@umich.edu}
\orcid{0000-0002-2081-7398}}
\and
\author[B]{\fnms{Qingyuan}~\snm{Zhao}\ead[label=e2]{qyzhao@statslab.cam.ac.uk}}
\address[A]{Department of Statistics, University of Michigan\printead[presep={,\ }]{e1}}
\address[B]{Statistical Laboratory, University of Cambridge\printead[presep={,\ }]{e2}}

\end{aug}

\begin{abstract}
Confounder selection, namely choosing a set of covariates to control
for confounding between a treatment and an outcome, is arguably the
most important step in the design of an observational study. Previous
methods, such as Pearl's back-door criterion, typically
require pre-specifying a causal graph, which can often be difficult in
practice. We propose an interactive procedure for confounder selection
that does not require pre-specifying the graph or the set of observed
variables. This procedure iteratively expands the causal graph by
finding what we call ``primary adjustment sets'' for a pair of
possibly confounded variables. This can be viewed as inverting a
sequence of marginalizations of the underlying causal
graph. Structural information in the form of primary adjustment sets
is elicited from the user, bit by bit, until either a set of
covariates is found to control for confounding or it can be
determined that no such set exists.
Other information, such as the causal relations between confounders,
is not required by the procedure.
We show that if the user correctly
specifies the primary adjustment sets in every step, our procedure is
both sound and complete.
\end{abstract}

\begin{keyword}[class=MSC]
\kwd[Primary ]{62A09}
\kwd[; secondary ]{62D20}
\end{keyword}

\begin{keyword}
\kwd{Causal inference}
\kwd{observational study}
\kwd{back-door criterion}
\kwd{background knowledge}
\kwd{adjustment}
\kwd{marginalization}
\end{keyword}

\end{frontmatter}

\section{Introduction}
\label{sec:introduction}
Consider an observational study where the causal effect of a treatment
variable $X$ on an outcome variable $Y$ is of interest. Arguably,
the single most widely used strategy for identifying the causal effect
is through \emph{confounder adjustment}, which employs a set of observed
covariates that are carefully chosen to control for confounding. Let
$Y(x)$ be the potential outcome of $Y$ had the treatment $X$ been
intervened on and set to level $x$. A set of covariates $S$ controls
for confounding if for every $x$ in the support of $X$, $X$ and $Y(x)$
are independent within each stratum defined by $S$, a condition known
as conditional ignorability or conditional exchangeability
\citep{rosenbaumCentralRolePropensity1983,greenland2009identifiability,hernan2020causal}. The
task of choosing such a set of covariates is called \emph{confounder
  selection}.

While there exist a variety of approaches and criteria for confounder
selection (see \citealp{guo2022confounder} for a recent survey), it is
clear that this cannot be answered by data alone and hence is
fundamentally different from statistical variable selection (e.g., stepwise algorithms in linear regression). That is,
domain knowledge about the underlying causal mechanism or structure is
indispensable. Causal
graphical models provide an intuitive framework for formalizing such
knowledge. Specifically, if we have available a causal directed
acyclic graph (DAG) model for relevant variables in an observational
study, the back-door criterion \citep{pearlBayesianAnalysisExpert1993}
answers whether or not a set of covariates $S$ controls for confounding, and this criterion is
complete in a sense that will be described in
\cref{sec:latent-projection}. The set $S$ is called a \emph{sufficient
  adjustment set} if it satisfies the back-door criterion.
When there is more than one sufficient adjustment
set, one may wish to further choose a set based on its size or
statistical efficiency \citep[see,
e.g.,][]{henckel2022,rotnitzky2020efficient,smucler2022note}. Yet,
these are \emph{secondary objectives} that we will largely set aside for the rest of this
paper (except for \cref{sec:choice-graph-expans}). As far as validity is concerned, confounder selection is sometimes
considered a ``solved problem'' in light of the back-door
criterion, provided that a causal DAG (or a latent projection of
the DAG) can be pre-specified to represent our domain knowledge and
assumptions.

However, the back-door criterion is often difficult to apply in
practice. As the back-door criterion is a global condition about
the candidate set of covariates $S$, the treatment $X$, the outcome
$Y$ and other
variables in the system (see \cref{sec:latent-projection}), a
practitioner must be able to (1) conceive
all the variables, observed or unobserved, that are relevant, (2)
posit all causal relations among these variables, (3) understand how a
DAG encodes causal assumptions, and (4) draw
the DAG accordingly, or at least a large portion of it, to transcribe
the posited relations graphically. While tools and protocols for drawing DAGs
have been developed to some extent \citep{shrier2008reducing,haber2022dagwood}, this
is still a formidable process in practice. It is often difficult to
conceive all the relevant variables, let alone posit all causal
relations among them.

\subsection{Overview of the iterative graph expansion procedure}
\label{sec:overv-iter-graph}

In this paper, we take an interactive, bottom-up approach to
confounder selection that does
not require pre-specifying the causal graph. Our method, called
\emph{iterative graph expansion}, is
based on a symmetric reformulation of Pearl's back-door criterion and
can be viewed as the inverse of \emph{graph marginalization} or
\emph{latent projection}
\citep{verma1990equivalence}. The knowledge about the underlying
causal graph is elicited from the user, bit by bit, until one or more
sets of covariates that meet the symmetric back-door criterion are found,
or it is determined that no such set exists. For this procedure, a key new concept is a
\emph{primary adjustment set}: a candidate adjustment set for a pair of variables is called primary
if for every common ancestor of the two variables, at least one of the
two causal paths from the ancestor to these two variables is blocked
by the adjustment set. Intuitively, a primary adjustment set removes
all ``immediate'' confounding between the two variables.

More specifically, the process of graph expansion starts with a working
graph consisting of two vertices --- the treatment $X$ and the outcome $Y$ --- and a dashed
bidirected edge, representing possible uncontrolled
confounding, between them. In each step, the user is asked to provide
candidates of \emph{primary adjustment sets} to expand a dashed
bidirected edge selected from the current working graph; if no such set exists,
the dashed bidirected edge is changed to a solid edge. If primary
adjustment sets do exist, then every such set leads to an expanded graph:
the selected dashed edge is removed and those vertices in the primary
adjustment set are introduced to the graph, with
a dashed bidirected edge drawn between every new vertex and every old
vertex, as well as between every pair of new vertices.
This process is repeated until the treatment and the outcome are no longer
connected by solid or dashed bidirected edges. When this occurs, variables other than $X$
and $Y$ in the working graph form a sufficient adjustment set.

\begin{figure}[tb]
\centering
\begin{tikzpicture}
\tikzset{rv/.style={circle,inner sep=1pt,draw,font=\sffamily},
lv/.style={circle,inner sep=1pt,fill=gray!20,draw,font=\sffamily},
fv/.style={rectangle,inner sep=1.5pt,fill=gray!20,draw,font=\sffamily},
node distance=12mm, >=stealth}
\node[rv] (X) {$X$};
\node[rv, above right of=X] (B) {$B$};
\node[rv, below right of=B] (Y) {$Y$};
\node[rv, above left of=B] (D) {$C$};
\node[rv, above right of=B] (C) {$D$};
\draw[->, thick] (B) -- (X);
\draw[->, thick] (B) -- (Y);
\draw[->, thick] (D) -- (B);
\draw[->, thick] (D) -- (X);
\draw[->, thick] (C) -- (B);
\draw[->, thick] (C) -- (Y);
\end{tikzpicture}
\quad
\begin{tikzpicture}
\tikzset{rv/.style={circle,inner sep=1pt,draw,font=\sffamily},
node distance=15mm, >=stealth}
\node[rv] (X) {$X$};
\node[rv, right of=X, xshift=0.5cm] (Y) {$Y$};
\draw[<->, dashed, thick, color=red] (X) -- (Y);
\end{tikzpicture}
\quad
\begin{tikzpicture}
\tikzset{rv/.style={circle,inner sep=1pt,draw,font=\sffamily},
lv/.style={circle,inner sep=1pt,fill=gray!20,draw,font=\sffamily},
fv/.style={rectangle,inner sep=1.5pt,fill=gray!20,draw,font=\sffamily},
node distance=15mm, >=stealth}
\node[rv] (X) {$X$};
\node[rv, above right of=X] (B) {$B$};
\node[rv, below right of=B] (Y) {$Y$};
\draw[<->, thick, dashed] (B) -- (X);
\draw[<->, thick, dashed, red] (B) -- (Y);
\end{tikzpicture}
\quad
\begin{tikzpicture}
\tikzset{rv/.style={circle,inner sep=1pt,draw,font=\sffamily},
lv/.style={circle,inner sep=1pt,fill=gray!20,draw,font=\sffamily},
fv/.style={rectangle,inner sep=1.5pt,fill=gray!20,draw,font=\sffamily},
node distance=15mm, >=stealth}
\node[rv] (X) {$X$};
\node[rv, above right of=X] (B) {$B$};
\node[rv, below right of=B] (Y) {$Y$};
\node[rv, above right of=B] (D) {$D$};
\draw[<->, thick, dashed] (X) -- (B);
\draw[<->, thick, dashed] (D) -- (B);
\draw[<->, thick, dashed, bend right] (D) to (X);
\draw[<->, thick, dashed, red] (D) to (Y);
\end{tikzpicture}
\quad
\begin{tikzpicture}
\tikzset{rv/.style={circle,inner sep=1pt,draw,font=\sffamily},
lv/.style={circle,inner sep=1pt,fill=gray!20,draw,font=\sffamily},
fv/.style={rectangle,inner sep=1.5pt,fill=gray!20,draw,font=\sffamily},
node distance=15mm, >=stealth}
\node[rv] (X) {$X$};
\node[rv, above right of=X] (B) {$B$};
\node[rv, below right of=B] (Y) {$Y$};
\node[rv, above right of=B] (D) {$D$};
\draw[<->, thick, dashed] (X) -- (B);
\draw[<->, thick, dashed] (D) -- (B);
\draw[<->, thick, dashed, bend right] (D) to (X);
\end{tikzpicture}

\caption{An illustration of the iterative graph expansion using
  the ``butterfly bias'' example \citep{ding2015adjust}. The edge chosen for expansion is
  highlighted in red. }
\label{fig:butterfly}
\end{figure}

\Cref{fig:butterfly} illustrates the iterative graph
expansion when the underlying causal graph (the leftmost
graph) is a ``butterfly'' \citep{ding2015adjust}. In the first iteration, the algorithm expands the potential bidirected edge $X \dbarrow Y$ with primary adjustment set $\{B\}$.
This removes $X \dbarrow Y$, but at the same time creates two more potential bidirected edges, $B
\dbarrow X$ and $B \dbarrow Y$. The second iteration of the algorithm
expands $B \dbarrow Y$ by further adding $\{D\}$ to the
graph, which creates three more potential bidirected edges,
$D \dbarrow X$, $D \dbarrow Y$, and $D \dbarrow B$. The next iteration
simply removes $D \dbarrow Y$, as there is no
confounding between $D$ and $Y$. This leads to the rightmost graph,
where $X$ and $Y$ are not connected by bidirected edges. Hence, the algorithm
returns $\{B,D\}$ as a sufficient adjustment set.
Similarly, the set $\{B,C\}$ can also be identified as a sufficient
adjustment set through another sequence of expansions.
In fact, $\{B,C\}$ and $\{B,D\}$ are the only two minimal sufficient
adjustment sets in this example, and $\{B,D\}$ is the so-called efficient
adjustment set \citep{rotnitzky2020efficient,henckel2022}. More
examples are given in \Cref{sec:an-example} and \Cref{sec:unobserved}.

Compared to existing methods for confounder selection, such as directly
applying the back-door criterion, the iterative graph expansion
procedure has several advantages.
First of all, it makes structural queries ``economically'' in the sense
that only so much
information needed for confounder selection is elicited from the user.
All other information, including the presence or absence of directed
edges among the confounders, is never requested.
Second, familiarity with causal graphs (such
as how to apply d-separation) is not required to deploy the procedure;
the user is only expected to identify common causes and mediators.
Finally, we show that this procedure is sound and complete in the
following sense: if all the primary adjustment
sets specified by the user are indeed primary, then every
adjustment set identified by the procedure is sufficient; further, if
the user correctly specifies all the minimal primary adjustment sets
in each step, then every minimal sufficient adjustment set will be
identified by the procedure. To facilitate the use of our procedure,
we provide a Shiny web application accessible
from \url{https://ricguo.shinyapps.io/InteractiveConfSel/}.

It is worth mentioning that there are other proposals in the
literature for confounder selection that only require partial
knowledge about the causal graph, including, notably, the \emph{disjunctive
  cause criterion} due to \citet{vanderweele2011new}; see also
\citet{vanderweele2019principles} for its variants. The disjunctive
cause criterion selects all the observed pre-treatment covariates that
are ancestors (i.e.\ direct or indirect causes) of the treatment, the
outcome or both. \citet{vanderweele2011new} showed that this
adjustment set is sufficient whenever the set of observed
pre-treatment covariates contains any sufficient adjustment set as a
subset; see also \citet{richardson2018discussion,guo2022confounder}. While
this criterion can be useful when data has already been collected and domain
knowledge is scarce, verifying the assumption that the set of
collected covariates contains a sufficient adjustment set can be as
difficult as the task of confounder selection itself.
Moreover, the disjunctive cause criterion and many other
existing proposals require pre-specifying the set of observed
covariates. As such, they are not best suited for designing an
observational study, where the primary goal is to determine what data
need to be collected. Finally, to the best of our knowledge, all the
existing methods for confounder selection are ``static'' --- they
cannot learn from the user in an interactive way.

\subsection{Other contributions and organization of this paper}

The iterative graph expansion procedure is built on several
conceptual and technical novelties, summarized as follows.

First, we provide a definition of \emph{confounding path},
the symmetric structure that induces confounding between two
variables. Specifically, a path between $A$ to $B$ is called
\emph{confounding} if it has two endpoint arrowheads. This allows us
to symmetrize Pearl's back-door criterion
and reduce the task of confounder selection to blocking all confounding paths
between the treatment and the outcome.
The notion of a confounding \emph{path} complements earlier efforts in
the literature towards properly defining a \emph{confounder}, or more
precisely, a confounding
\emph{variable}. \citet{vanderweeleDefinitionConfounder2013} reviewed
several popular notions of confounders and showed that none
of them is satisfactory. They further provided an alternative
definition: a confounder is a pre-exposure
covariate $C$ for which there exists a set of other covariates $S$
such that $(S,C)$ is a sufficient adjustment set but no proper subset
of $(S,C)$ is sufficient. However, this definition does not lead to a practical
procedure for selecting confounders.

Second, we adopt the notion of \emph{confounding arcs} from \citet[Section
3.5]{pearl2009} and provide a concrete definition: a confounding arc
is a confounding path with no colliders.
It is easy to see that every confounding path can be
decomposed into one or more confounding arcs. Importantly, when a set
of covariates block a confounding arc, all of its supersets also block
the same arc. Yet, the same property does not hold for confounding
paths consisting of more than one confounding arc. Nevertheless, we
show that, by iteratively blocking confounding arcs, we can eventually
block all the confounding paths between the treatment and the outcome.

Third, we introduce a \emph{district criterion} for confounder
selection, which posits that a set $S$ is an adjustment set if $X$ and
$Y$ are in different ``districts'' in the marginal graph over $S \cup
\{X,Y\}$. This is equivalent to the symmetric back-door criterion but
can be easier to check in practice.

Finally, to develop our procedure and prove its soundness and
completeness, we introduce a set of refined m-connection/separation
relations, along with a set of notation, to reason about confounding
paths and arcs. As with the usual m-connection \citep{richardson03_markov_proper_acycl_direc_mixed_graph}, we show that these
relations are preserved by marginalization.

The rest of this paper is organized as follows. In
\cref{sec:graph-preliminaries}, we introduce some preliminaries on
causal graphical models, including some basic graphical terms,
m-separation and marginalization. In \cref{sec:suff-adjustm-sets},
we introduce confounding paths and confounding arcs, with which we
formulate a symmetric version of the back-door criterion. We study the
properties of some refined m-connections in \cref{sec:m-conn}. We show
that marginalization preserves connections by confounding paths
(\cref{thm:preserve}), which then leads to the district criterion
(\cref{sec:district-criterion}). We present our iterative graph expansion procedure in \cref{sec:expand}. When the user acts
like an oracle and answers all queries about primary adjustment sets
correctly, our procedure is
shown to be both sound and complete (\cref{thm:alg}). Practical
considerations and subroutines needed for the procedure are studied
in \cref{sec:choice-graph-expans,sec:find-primary}. We conclude with
some discussion in \cref{sec:discuss}. A more complex example is
provided in \Cref{sec:an-example} to illustrate the iterative graph
expansion procedure. Proofs and other supplementary materials can be
found in the appendices.

\section{Preliminaries of causal graphical models}
\label{sec:graph-preliminaries}

\subsection{Basic graphical concepts}
\label{sec:basic-graph-conc}

A \emph{directed mixed graph} $\gG = (\sV, \sD, \sB)$ is a graph over a finite
vertex set $\sV$ that consists of two types of edges, directed edges $\sD
\subseteq \sV \times \sV$ and bidirected edges $\sB \subseteq \sV
\times \sV$. Each vertex in $\sV$ represents a random variable.
We write $A \rdedge B$ for a directed edge $(A,B) \in \sD$ and $A \bdedge B$ for a bidirected edge $(A,B) \in \sB$ (and hence
also $(B,A) \in \sB$). A maximal set of vertices connected by
bidirected edges is called a \emph{district}.
Between two vertices $A,B \in \sV$, we allow the existence of both a directed edge and a bidirected edge.
A \emph{directed cycle} is a sequence of directed edges $v_1 \rdedge \dots \rdedge v_l \rdedge v_1$ with $l > 1$.
If $\gG$ contains no directed cycle, we call $\gG$ an \emph{acyclic directed mixed graph} (ADMG).
Further, if $\gG$ contains no bidirected edge, we call it a \emph{directed acyclic graph} (DAG).

A \emph{walk} is a sequence of adjacent edges of any type or orientation.
If all vertices on a walk are distinct, we say it is a \emph{path}.
For ADMGs, it is necessary
to specify a walk or path as a sequence of edges instead of vertices, as
there may exist both directed and bidirected edges between two
vertices. A walk or path is directed if it is formed of directed edges
pointing in the same direction.

For any walk $\pi$, a non-endpoint vertex $A$ is said to be a
\emph{collider} on $\pi$ if both the edges before and after $A$ have
an arrowhead pointing to $A$, or in other words, if $\pi$ contains
$\rdedge A \ldedge$, $\rdedge A \bdedge$,
$\bdedge A \ldedge$ or $\bdedge A
\bdedge$.
Note that a vertex can be a collider on one path and a non-collider on
another path.  We use a squiggly line ($\nosquigno$) to signify that a
given walk or path contains no collider and we call such a walk or
path an \emph{arc}; some authors use \emph{trek} to
describe the same concept
\citep{spirtesCausationPredictionSearch1993,sullivantTrekSeparationGaussian2010}.
An arc can be further specified by the arrowhead or tail added
to the squiggly line.
A directed path from $A$ to $B$ is
written as $A \rdpath B$ or $B \ldpath A$. A path from $A$ to $B$ with no
colliders and two endpoint arrowheads is called a \emph{confounding arc}
and written as $A \confarc B$. We use a half-arrowhead to indicate that the
endpoint can be either an arrowhead or a tail. For example, $A \halfsquigfull B$
means is the path is either $A \rdpath B$ or $A \confarc B$.

We adopt the common familial terminologies for graphical models. A
vertex $A$ is said to be a \emph{parent} of another vertex $B$ and
$B$ a \emph{child} of $A$ in the
ADMG $\gG$ if $\gG$ contains $A \rdedge B$; and $A$ is said to be
an \emph{ancestor} of $B$ and $B$ a \emph{descendant} of $A$ if $\gG$
contains $A \rdpath B$.
(This differs from the convention
that $A$ is considered both an ancestor and a descendant of itself used by
many authors.)
With these relations, we define sets $\pa_{\gG}(v), \ch_{\gG}(v),
\an_{\gG}(v), \de_{\gG}(v)$ for a vertex $v \in \sV$ and extend these
definitions to a vertex set $A \subseteq \sV$ disjunctively, e.g.,
$\pa_{\gG}(A) := \cup_{v \in A} \pa_{\gG}(v)$, $ \ch_{\gG}(A) :=
\cup_{v \in A} \ch_{\gG}(v)$, etc. Additionally, we let
$\overline{\an}_{\gG}(A) := \an_{\gG}(A) \cup A$. The subscript $\gG$
is often omitted when it is clear from the context.

\subsection{m-separation}
\label{sec:m-separation}

Introduced by \citet{richardson03_markov_proper_acycl_direc_mixed_graph},
m-separation extends the d-separation criterion
for conditional independence in DAGs \citep{pearl1988book} to
ADMGs. To introduce this concept, we say a \emph{path} in an ADMG $\gG$ is
\emph{(ancestrally) blocked} by $C \subseteq V$ if there exists a
non-collider on the path that is in $C$ or a collider on the path that
is in $C$ or has a descendant in $C$. This is the usual notion of
blocking in the literature \citep{pearl1988book}. Later in the
article, we will introduce a slightly different notion of blocking
that does not depend on the ambient graph and will refer to this
conventional notion as ancestral blocking.

For disjoint sets $A,B,C \subset \sV$, if there exists a path
in $\gG$ from any vertex in $A$ to any vertex in $B$ not (ancestrally)
blocked given $C$, we say
$A$ and $B$ are \emph{m-connected} in $\gG$ given $C$ and denote it by
$A \mconn B \mid C \ingraph{\gG}$; otherwise if all paths from $A$ to
$B$ are (ancestrally) blocked by $C$, we say $A$ and $B$
are \emph{m-separated} by $C$ in $\gG$ and denote it by $\textnot A \mconn B
\mid C\ingraph{\gG}$. While some authors use the independence symbol
`$\indep$' to denote m-separation, our notation more clearly describes
the \emph{type} of the underlying path: (1) the half-arrowheads
indicate that both endpoints are unrestricted in terms of arrowhead
or tail; (2) the wildcard character `$\ast$' means the path can have zero,
one or several colliders. As such, $A \mconn B$ refers to all paths
from $A$ to $B$. The advantages of this notation will become clearer in
\cref{sec:suff-adjustm-sets,sec:m-conn}.

\subsection{Graph marginalization} \label{sec:latent-projection}
In many cases, a graph can contain variables that are not observed (or
observable\footnote{We say a variable is \emph{observed} if the
  variable has been measured and recorded, and we say a variable is
  \emph{observable} if the variable can be measured. Typically, the
  former notion is used when analyzing a study and the latter when
  designing a study. For convenience, we do not differentiate the two
  and use ``observed'' throughout. }) and it is convenient to
contemplate the \emph{marginal graph} that describes the marginal
distribution of the observed variables; this process is often referred
to as \emph{latent projection} in the literature
\citep{pearlTheoryInferredCausation1991}.
Let $\gG = (\sV,\sD,\sB)$ be an ADMG and let $\sV =\tilde{\sV} \cup \sU$ be a
partition (so $\tilde{\sV} \cap \sU = \emptyset$).
The marginalization of $\gG$ onto $\tilde{\sV}$ is the ADMG
$\gG(\tilde{\sV}) = (\tilde{\sV}, \tilde{\sD}, \tilde{\sB})$ with
directed and bidirected edges given by
\begin{itemize}
\item $A \rdedge B$ in $\gG(\tilde{\sV})$ if there exists a directed path in $\gG$ on which all the non-endpoint vertices are in $\sU$ (such paths are denoted as $A
  \overset{\text{via}\,\sU}{\rdpath} B$), and
\item $A \bdedge B$ in $\gG(\tilde{\sV})$ if there exists a path
  from $A$ to $B$ such that the path has two endpoint arrowheads and no
  colliders and has all the non-endpoint vertices in $\sU$ (such paths
  are denoted as $A  \overset{\text{via}\,\sU}{\confarc} B$).
\end{itemize}
Trivially, $A \rdedge B$ in $\gG$ implies $A \rdedge B$ in $\gG(\tilde{\sV})$ and $A \bdedge B$ in $\gG$ implies $A \bdedge B$ in $\gG(\tilde{\sV})$.
Using the squiggly line notation, marginalization can be compactly
defined as
\[
  A
  \begin{Bmatrix}
    \rdedge \\
    \ldedge \\
    \bdedge \\
  \end{Bmatrix}
  B \ingraph{\gG(\tilde{\sV})}
  \quad
  \iff
  \quad
  A
  \begin{Bmatrix}
    \overset{\text{via}\,\sU}{\rdpath} \\
    \overset{\text{via}\,\sU}{\ldpath} \\
    \overset{\text{via}\,\sU}{\fullsquigfull} \\
  \end{Bmatrix}
  B\ingraph{\gG}.
\]
It follows from this definition that marginalization preserves
acyclicity. It is well-known that m-separation is
preserved by marginalization
\citep{verma1990equivalence,evans2018markov}: for disjoint $A, B, C
\subset \tilde{\sV}$, we have
\[
  A \mconn B \mid C \ingraph{\gG(\tilde{\sV})} \quad \iff \quad A
  \mconn B \mid C\ingraph{\gG}.
\]
In \cref{sec:m-conn}, we strengthen this result by showing that
marginalization also preserves some refined forms of m-connections.

A central problem in causal inference is to identify the causal
effect of a treatment variable $X$ on an outcome variable $Y$ under
the causal model represented by an ADMG $\gG$. The most widely used
approach is to adjust for a set of variables $S$ that controls for
confounding. Under a standard causal graphical model (see below), a
set $S \subseteq \sV \setminus \{X,Y\}$ is guaranteed to control for
confounding if it satisfies the \emph{back-door criterion}
\citep{pearlBayesianAnalysisExpert1993} on $\gG$:
\begin{equation} \label{eqs:back-door}
\begin{split}
&\text{1. $S$ contains no descendant of $X$, and}\\
&\text{2. there is no m-connected path from $X$ to $Y$ given $S$ with an arrowhead into $X$.}
\end{split}
\end{equation}
Further, \citet[Theorem 5]{shpitser2010validity} showed that this criterion is
complete in the following sense: given any set $S'
\subseteq \sV \setminus \{X,Y\}$ that controls for confounding under
the causal model represented by $\gG$, the set $S' \setminus \de(X)$
must satisfy the back-door criterion. Because set $S' \setminus \de(X)$
is no bigger than $S'$ and also controls for confounding,
it suffices to only consider those $S$ that do not contain descendants of $X$.
We will stick to this convention for the rest of this paper.

\subsection{Causal models} \label{sec:causal-models}
Heuristically, a directed edge $A \rdedge B$ signifies the presence of
a \emph{direct} causal effect of $A$ on $B$ that is not mediated by
other variables in the graph. A bidirected edge $A \bdedge B$
signifies the presence of
\emph{exogenous correlation} --- a basic form of unmeasured
\emph{confounding} --- between $A$ and $B$ that cannot be explained
away by other variables in the graph; this is closely related to the concept
of endogeneity in the econometrics literature
\parencite{engleExogeneity1983a}. Formally, a causal model can be defined
on ADMGs using the potential outcomes language and unconfoundedness
is often defined as the following conditional independence
\parencite{rosenbaumCentralRolePropensity1983}:
\begin{equation}
  \label{eq:unconfoundedness}
  X \indep Y(x) \mid S \quad \text{for all}~x,
\end{equation}
where $Y(x)$ is the potential outcome of $Y$ when $X$ is set to
$x$. The single-world intervention graph provides an easy way to check
conditional independence involving potential outcomes; in fact, the
back-door criterion is equivalent to m-separation in the single-world
intervention graph \parencite{richardson2013single}. As our main
contribution is graphical, we will refrain from further discussion and
refer the reader to \textcite{shpitser2022multivariate} and
\textcite{zhaoStatisticalCausalModels2025} for recent work on
causal models associated with ADMGs.

\section{Confounding paths, confounding arcs and sufficient adjustment
  sets} \label{sec:suff-adjustm-sets}
Despite the well-established notion of unconfoundedness as the
conditional independence in \eqref{eq:unconfoundedness}, it may come
as a surprise that giving a positive definition of a confounding
variable is more difficult. This difficulty arises partly due to the
lack of monotonicity of unconfoundedness --- a set controls for
confounding but its superset may not.

In this section, we reformulate Pearl's back-door criterion
symmetrically in terms of confounding paths, which are structures in
the graph that induce confounding between the two endpoints. This
allows us to shift the consideration of confounding from vertices to
edges and paths in the graph.

\begin{definition}
  Consider an ADMG $\gG$ over a vertex set $\sV$. For distinct $A,B
  \in \sV$,
  let $\sP[A \mconn B \ingraph{\gG}]$ denote the set of all paths between $A$ and
  $B$ in $\gG$, and $\sP[A \rdpath B \ingraph{\gG}]$ denote the subset
  of all directed paths from $A$ to $B$. Define the set of \emph{confounding paths}
  between $A$ and $B$ in $\gG$ as
  \[
    \sP[A \confpath B \ingraph{\gG}] := \{\pi \in \sP[A \mconn B
    \ingraph{\gG}]:  \text{$\pi$ has two endpoint arrowheads}\}.
  \]
  Those confounding paths without colliders are called \emph{confounding arcs}:
  \[
    \sP[A \confarc B \ingraph{\gG}] := \{\pi \in \mathcal{P}[A
    \confpath B \ingraph{\gG}]: \pi \text{ has no collider} \}.
  \]
\end{definition}

A confounding arc immediately induces \emph{non-causal}
association between the endpoints when none of its non-endpoint
vertices are conditioned on.
A confounding arc between $A$ and $B$ is of the form
\[
  A \ldpath U_1 \rdpath B \quad \text{or} \quad A
  \ldpath U_1 \bdedge U_2 \rdpath B.
\]
As our notation `$\confpath$' suggests, a
confounding path is either a confounding arc or a concatenation of
multiple confounding arcs. In the latter case, the confounding path
may induce non-causal association when all the colliders (or their
descendants) on the path are conditioned on. See also 
\textcite{zhaoMatrixAlgebraGraphical2024} for a systematic treatment
of these graphical concepts using a matrix algebra on the set of walks
in $\gG$.

To further distinguish different types of paths that
are ancestrally unblocked given a set $C$ (see
\Cref{sec:m-separation}), we adopt the following notation.

\begin{definition} \label{def:refined-m}
  For distinct $A,B \in \sV$ and $C \subseteq \sV \setminus \{A,B\}$,
  denote
  \[
    \sP[A \cdots B \mid_a C \ingraph{\gG}] = \{\pi \in \sP_{\gG}[A
                                         \cdots B \ingraph{\gG}]: \pi
                                                 \text{~is ancestrally unblocked
                                                 given $C$}\},
  \]
  where `$\cdots$' can be `$\rdpath$', `$\confarc$', `$\mconn$', `$\confpath$', etc.
\end{definition}

We often omit ``$\textnormal{\bf in} \gG$'' from our
notation when the graph $\gG$ is clear from the
context. The subscript ``$a$'' in $|_a$ is
used to emphasize that we are
considering the conventional notion of (ancestral) blocking, where a
vertex can unblock a path if the path contains a collider that is not the vertex but the vertex's
ancestor. We will consider a slightly different notion of blocking
that does not permit this; see \Cref{def:walk-m-conn} below.

\begin{remark} \label{rem:path}
  Note that conditioning on the empty set changes the meaning of our
  notation when the collection of paths may contain colliders. For a
  given ADMG $\gG$ and $A,B,C$ in \Cref{def:refined-m}, we have
  \begin{equation}
    \label{eq:confarc-is-confpath}
  \sP[A \confarc B \mid_a C] \subseteq \sP[A \confpath B
  \mid_a C].
  \end{equation}
When no variable is conditioned on, an unblocked (confounding) path
must be a (confounding) arc:
\begin{align*}
  \sP[A \confpath B \mid_a \emptyset] = \sP[A \confarc B
  \mid_a \emptyset] =& \sP[A \confarc B] \neq \sP[A \confpath B],\\
  \sP[A \mconn B \mid_a \emptyset] = \sP[A \mconnarc B
  \mid_a \emptyset] =& \sP[A \mconnarc B] \neq \sP[A \mconn B].
\end{align*}
In other words, conditioning on $\emptyset$ means that the unblocked
paths cannot contain colliders.
\end{remark}

Recall that in \Cref{sec:graph-preliminaries} we write the relation
``$A$ and $B$ are m-connected given $C$'' as
\[ A \mconn B \mid C \iff \sP[A \mconn B \mid_a C] \neq \emptyset, \]
and the relation ``$A$ and $B$ are m-separated given $C$'' as
\[\textnot A \mconn B \mid C \iff \sP[A \mconn B \mid_a C] =
  \emptyset. \]
Similarly, we define the following relations using other types of paths.

\begin{definition} \label{def:conf-relation}
  Let $\gG$ be an ADMG over vertex set $\sV$. For any distinct $A,B \in
  \sV$ and $C \subseteq \sV \setminus \{A,B\}$, we write
  \[
    A \begin{Bmatrix} \rdpath \\ \confarc \\
        \confpath \end{Bmatrix} B \mid C \ingraph{\gG}
      \quad \iff \quad
      \sP \left[  A
        \begin{Bmatrix} \rdpath  \\ \confarc \\
          \confpath \end{Bmatrix} B \mid_a C \ingraph{\gG} \right] \neq \emptyset
    \]
    and
    \[ \textnot A \begin{Bmatrix} \rdpath \\ \confarc \\
           \confpath \end{Bmatrix} B \mid C \ingraph{\gG} \quad \iff
         \quad
           \sP \left[  A
  \begin{Bmatrix} \rdpath  \\ \confarc \\
    \confpath \end{Bmatrix} B \mid_a C \ingraph{\gG} \right] =
\emptyset.
\]
\end{definition}

Note that the relations above pertaining to `$\confarc$' and
`$\confpath$' are symmetric in $A$ and $B$; these relations satisfy certain 
graphoid-like properties as discussed in \Cref{sec:graphoid}.
The next result directly follows from definition.

\begin{lemma}[Monotonicity of unblocked arcs] \label{lem:monotone}
  Let $\gG$ be an ADMG over vertex set $\sV$. For $\tilde{C} \subseteq
  C \subseteq \sV \setminus \{A, B\}$, we have
  \begin{align*}
     &A \rdpath B \mid C \ingraph{\gG} \implies A \rdpath B \mid
       \tilde{C} \ingraph{\gG}, \\
    &A \confarc B \mid C \ingraph{\gG} \implies A
  \confarc B \mid \tilde{C} \ingraph{\gG}.
  \end{align*}
\end{lemma}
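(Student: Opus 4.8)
The claim is that adding vertices to the conditioning set cannot create a new ancestrally‑unblocked directed path or confounding arc: if such a path survives conditioning on the larger set $C$, it also survives conditioning on the smaller set $\tilde C$. The plan is to argue contrapositively and fix an arbitrary $\pi$ witnessing the left-hand relation, then show $\pi$ itself already witnesses the right-hand relation. The key observation is that an \emph{arc} (directed path or confounding arc) contains \emph{no colliders} by definition, so the only way for a set $D$ to ancestrally block $\pi$ is for $D$ to contain a \emph{non-collider} on $\pi$ — the collider-or-descendant-of-collider clause in \Cref{sec:m-separation} is vacuous for arcs. Thus ``$\pi$ is ancestrally unblocked given $D$'' is equivalent, for arcs, to ``$D$ contains no non-endpoint vertex of $\pi$,'' and since every non-endpoint vertex of an arc is a non-collider, this is just ``$D$ is disjoint from the non-endpoint vertices of $\pi$.''

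Concretely, I would proceed as follows. First, take $\pi \in \sP[A \rdpath B \mid_a C \ingraph{\gG}]$; since $\pi$ is a directed path, it has no colliders, so the hypothesis that $\pi$ is ancestrally unblocked given $C$ reduces to: no non-endpoint vertex of $\pi$ lies in $C$. Because $\tilde C \subseteq C$, a fortiori no non-endpoint vertex of $\pi$ lies in $\tilde C$, and $\pi$ has no colliders, so $\pi$ is ancestrally unblocked given $\tilde C$; hence $\pi \in \sP[A \rdpath B \mid_a \tilde C]$ and the first implication follows from \Cref{def:conf-relation}. Second, the identical argument applies verbatim to a confounding arc $\pi \in \sP[A \confarc B \mid_a C]$: a confounding arc has no colliders by definition, so again ancestral unblocking reduces to disjointness of the conditioning set from $\pi$'s non-endpoint vertices, and this disjointness is inherited when passing from $C$ to the subset $\tilde C$. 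This gives the second implication.

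There is essentially no main obstacle here — the lemma is a direct unwinding of definitions, which is why the text says it ``directly follows from definition.'' The only point requiring a moment's care is confirming that the collider clause in the definition of ancestral blocking genuinely contributes nothing when the path is collider-free, i.e. that there is no hidden way for a descendant of a would-be collider to matter; but since ``$\pi$ has a collider in $C$ or with a descendant in $C$'' is false for \emph{every} set $C$ when $\pi$ has no colliders at all, this is immediate. I would therefore present the proof as: (i) recall that arcs have no colliders; (ii) deduce that for arcs, ancestral blocking by $D$ is equivalent to $D$ meeting a non-endpoint (necessarily non-collider) vertex of $\pi$; (iii) observe this condition is monotone in $D$ in the required direction, so unblockedness is anti-monotone; (iv) conclude via \Cref{def:conf-relation}. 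For contrast, I would note in a sentence why the analogous statement fails for `$\confpath$' and `$\mconn$' — a larger conditioning set can activate a collider and thereby \emph{create} an unblocked path — which is exactly the non-monotonicity phenomenon the paper has been emphasizing and which motivates working with arcs in the first place.
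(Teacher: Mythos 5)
Your proof is correct and matches the paper, which gives no argument beyond noting that the lemma ``directly follows from definition'': an arc has no colliders, so ancestral blocking by a set reduces to that set containing a non-endpoint (hence non-collider) vertex of the arc, a condition that is monotone in the conditioning set. Your unwinding of this, including the remark on why the collider clause is vacuous, is exactly the intended argument.
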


It is well known that such monotonicity is not true for blocking
confounding paths; a famous counter-example is the M-bias graph
\citep{greenland1999causal}.

Next, we offer our definition of a sufficient adjustment set that is
symmetric in $X$ and $Y$.
\begin{definition}[Adjustment set] \label{def:adjust}
Let $\gG$ be an ADMG over vertex set $\sV$. For distinct $X,Y \in
\sV$, we say $S \subseteq \sV \setminus \{X,Y\}$ is an
\emph{adjustment set} for $X$ and $Y$ if it contains no descendants of $X$ or $Y$,
namely,
\[\textnot X \rdpath S \ingraph{\gG} \quad \text{and} \quad \textnot Y
  \rdpath S \ingraph{\gG}.\]
We say an adjustment set $S$ is \emph{sufficient} if
\[\textnot X \confpath Y \mid S \ingraph{\gG}. \]
Moreover, we say a sufficient adjustment set $S$ is \emph{minimal} if
none of its proper subsets is also sufficient.
\end{definition}

\begin{theorem}[Symmetric back-door criterion] \label{thm:adjust}
Let $\gG$ be an ADMG over vertex set $\sV$ and consider distinct
vertices $X,Y \in \sV$. If $X \rdpath Y \ingraph{\gG}$, then $S$ is a
sufficient adjustment set for $X$ and $Y$ if and only if $S$ satisfies the back-door
criterion in \eqref{eqs:back-door}.
\end{theorem}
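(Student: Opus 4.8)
The plan is to prove the two implications by comparing, for a fixed candidate $S\subseteq\sV\setminus\{X,Y\}$, the confounding paths that are ancestrally unblocked given $S$ with the \emph{backdoor paths} --- paths from $X$ to $Y$ carrying an arrowhead into $X$ --- that are ancestrally unblocked given $S$. The guiding observation is that every confounding path is in particular a backdoor path (an arrowhead at $X$ is built into the definition of a confounding path), so on its face the symmetric condition $\textnot X\confpath Y\mid S$ is weaker than the second condition in \eqref{eqs:back-door}; the hypothesis $X\rdpath Y$ is what closes the gap. I would first record its two uses: concatenating directed paths gives $\de(Y)\subseteq\de(X)$, and acyclicity of $\gG$ forbids any directed path $Y\rdpath X$.

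The direction ``$\Leftarrow$'' is then immediate. If $S$ satisfies \eqref{eqs:back-door}, its first condition gives $S\cap\de(X)=\emptyset$, hence also $S\cap\de(Y)=\emptyset$ since $\de(Y)\subseteq\de(X)$, so $S$ is an adjustment set in the sense of \Cref{def:adjust}. If $S$ failed to be sufficient there would be some $\pi\in\sP[X\confpath Y\mid_a S]$, an ancestrally unblocked path from $X$ to $Y$ with two endpoint arrowheads, hence a backdoor path, contradicting the second condition in \eqref{eqs:back-door}; so $\textnot X\confpath Y\mid S$ and $S$ is a sufficient adjustment set.

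For ``$\Rightarrow$'', assume $S$ is a sufficient adjustment set. The first condition in \eqref{eqs:back-door} is exactly $S\cap\de(X)=\emptyset$, which holds because $S$ is an adjustment set. For the second, suppose toward a contradiction that $\pi$ is an ancestrally unblocked path from $X$ to $Y$ given $S$ with an arrowhead into $X$. If $\pi$ also has an arrowhead into $Y$ then $\pi$ is a confounding path, so $\pi\in\sP[X\confpath Y\mid_a S]$, contradicting sufficiency. Otherwise $\pi$ has a tail at $Y$: traversing $\pi$ from $Y$ the first edge points out of $Y$, so if $\pi$ had no collider it would be a directed path $Y\rdpath X$, ruled out by acyclicity since $X\rdpath Y$. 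Hence $\pi$ has a collider; let $c$ be the first one met when traversing from $Y$. The $Y$-to-$c$ subpath of $\pi$ is an arc with a tail at $Y$ and an arrowhead at $c$, so it is a directed path $Y\rdpath c$, and $c\in\de(Y)\subseteq\de(X)$. Since $c$ is a collider on the ancestrally unblocked path $\pi$, some vertex of $\{c\}\cup\de(c)$ lies in $S$; but $\{c\}\cup\de(c)\subseteq\de(X)$ by transitivity of the descendant relation, contradicting $S\cap\de(X)=\emptyset$. This establishes the second condition in \eqref{eqs:back-door}.

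The routine ingredients are a structural fact about arcs --- an arc between two vertices that has a tail at one of them is a directed path out of that vertex --- and transitivity of the descendant relation; both follow directly from the definitions in \Cref{sec:basic-graph-conc}. The main obstacle, and the only place where $X\rdpath Y$ is used beyond reconciling the two ``no descendant'' conditions, is the tail-at-$Y$ case: one must pin down a collider close to $Y$ all of whose descendants are descendants of $X$, so that the activeness of $\pi$ there forces $S$ to intersect $\de(X)$. That the argument cannot be pushed further is confirmed by the failure of the statement when $X\rdpath Y$ is dropped --- e.g. with $Y\in\pa(X)$ the empty set is a sufficient adjustment set yet violates the second condition in \eqref{eqs:back-door}.
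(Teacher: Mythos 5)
Your proof is correct and follows essentially the same route as the paper's: both directions reduce to comparing confounding paths with back-door paths, using $\de(Y)\subseteq\de(X)$ to reconcile the two descendant conditions, and the key step in both arguments handles the tail-at-$Y$ case by taking the collider closest to $Y$, observing that the intervening arc makes it a descendant of $Y$ (hence of $X$), and contradicting $S\cap\de(X)=\emptyset$ via the ancestral-unblocking requirement. Your write-up just spells out a few steps the paper leaves terse.
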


\Cref{thm:adjust} shows that
our definition is equivalent to Pearl's back-door criterion given a
very mild assumption that imposes the causal direction.
In view of this, confounder selection boils down to
finding an adjustment set $S$ that blocks all the confounding paths
`$\confpath$' between $X$ and $Y$; thus $S$ necessarily blocks all
confounding arcs `$X \confarc Y$'. However, due to the monotonicity
property (\cref{lem:monotone}), blocking confounding arcs is considerably
easier than blocking confounding paths --- expanding $S$ does
not introduce new unblocked confounding arcs but can introduce new
unblocked confounding paths due to colliders. This observation
motivates our procedure in \Cref{sec:expand} that blocks confounding
paths by iteratively blocking confounding arcs.

\begin{remark} \label{rem:symm}
  Our notion of adjustment set and back-door criterion is symmetric in
  $X$ and $Y$. This is made possible by excluding descendants
  of $X$ or $Y$ in adjustment sets. This requirement is essential in
  the graph expansion algorithm below to avoid inquiring the causal
  direction between variables in the adjustment set. In principle, there may exist 
  a set of covariates that contain certain descendants of $X$ (that do not lie on $X
  \rdpath Y$) that can also control for confounding  --- such a set can be called a ``generalized adjustment set'' (see \citealp{shpitser2010validity,perkovic2015}). Yet, descendants of $X$ are unnecessary to adjust for (rarely seen in practice) and they are also prohibited in Pearl's back-door criterion.
\end{remark}

\section{Properties of graph
  connections/separations} \label{sec:m-conn}

Our approach to confounder selection is based on iterative graph expansion, a
process that inverts a sequence of graph marginalizations. Thus, it is
important to understand how confounding paths behave under
marginalization. In fact, similar to the ordinary
m-connection/separation, the relations in \cref{def:refined-m}
(connections via directed paths, confounding arcs, and confounding
paths) are also preserved by marginalization.

\begin{theorem} \label{thm:preserve}
Let $\gG$ be an ADMG over vertex set $V$. For any  distinct $A, B \in
\sV$, $C \subseteq V \setminus \{A,B\}$, and vertex set $\tilde{V}$ such
that $\{A,B\} \cup C \subseteq \tilde{V} \subseteq V$, we have
\[
  A \begin{Bmatrix} \rdpath \\ \confarc \\
      \confpath \\ \mconn \end{Bmatrix} B \mid C \ingraph{\gG} \quad \iff \quad
    A \begin{Bmatrix} \rdpath \\ \confarc \\
        \confpath \\ \mconn \end{Bmatrix} B \mid C
      \ingraph{\gG(\tilde{V})}.
\]
\end{theorem}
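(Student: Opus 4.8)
The plan is to prove the four equivalences by a common strategy: show that the relevant class of paths in $\gG$ and in $\gG(\tilde V)$ are in ``blocking-preserving correspondence''. It suffices to treat the single marginalization step where $\tilde V = V \setminus \{w\}$ for a single vertex $w \notin \{A,B\} \cup C$, since a general marginalization is a composition of such steps and all four relations are then preserved by induction on $|V \setminus \tilde V|$. So fix such a $w$ and write $\gG' = \gG(V \setminus \{w\})$.

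\medskip

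\noindent\textbf{Step 1: From $\gG$ to $\gG'$.} Suppose $\pi$ is a path in $\gG$ from $A$ to $B$ that is ancestrally unblocked given $C$ and is of the prescribed type (directed, confounding arc, confounding path, or arbitrary). I would describe an explicit ``contraction'' operation: if $w$ does not lie on $\pi$, leave $\pi$ unchanged; if $w$ lies on $\pi$ as a non-collider, then $\pi$ locally looks like $\cdots \halfsquigfull w \halfsquigfull \cdots$ with the two incident edges forming an arc through $w$, and because $w \notin C$ this non-collider is not what blocks $\pi$; I would replace the sub-walk $u_1 \halfsquigfull w \halfsquigfull u_2$ by the single edge $u_1 \halfsquigfull u_2$ which exists in $\gG'$ by the definition of marginalization (a directed or confounding-arc edge ``via $\{w\}$''). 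If instead $w$ lies on $\pi$ as a collider, I would argue this cannot happen for a path that is ancestrally unblocked given $C$: a collider $w$ on an unblocked path must have a descendant in $C$, but $w$ was chosen outside $C$ — wait, this is not quite enough since $w$ itself being a collider only requires $w$ or a descendant of $w$ to be in $C$; I would instead handle the collider case by noting that then some descendant $d \in C$ of $w$ exists, choose a shortest directed path $w \rdpath d$, and splice it in — but this creates a new vertex on the path. The cleaner route, which I would adopt, is the standard one: rather than processing $\pi$ vertex-by-vertex, first replace $\pi$ by a path $\pi^\ast$ that is unblocked given $C$, has the same endpoints and same ``type,'' and on which every collider lies in $C$ (not merely has a descendant in $C$) — this is possible by re-routing colliders through directed paths into $C$, and it does not change whether the path is a confounding path or whether it is directed. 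After this normalization, a collider on $\pi^\ast$ lies in $C \subseteq \tilde V$ so it survives; a non-collider $w$ on $\pi^\ast$ is outside $C$ and gets contracted as above. The result is a walk in $\gG'$; I then reduce it to a path (removing cycles never destroys the unblocked-given-$C$ property for ancestral blocking, and never turns a confounding path into a non-confounding one since the endpoints and their arrowheads are untouched), and one checks the type is preserved: contracting an arc-segment through a non-collider keeps a directed path directed, keeps an arc an arc, and within a confounding path each maximal collider-free sub-arc is transformed into another collider-free sub-arc.

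\medskip

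\noindent\textbf{Step 2: From $\gG'$ to $\gG$.} Conversely, given an unblocked path $\rho$ of the prescribed type in $\gG'$, each of its edges is either an edge of $\gG$ or is witnessed by a path ``via $\{w\}$'' in $\gG$ — that is, $u_1 \rdpath u_2$ in $\gG'$ comes from $u_1 \rdedge w \rdedge u_2$ in $\gG$ (the only via-$\{w\}$ directed path through the single vertex $w$), and similarly $u_1 \confarc u_2$ in $\gG'$ comes from one of $u_1 \ldedge w \rdedge u_2$, $u_1 \bdedge w \rdedge u_2$, $u_1 \ldedge w \bdedge u_2$, or $u_1 \bdedge w \bdedge u_2$ in $\gG$. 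Expanding every edge of $\rho$ in this way produces a walk in $\gG$ in which $w$ appears only as a non-collider (one checks this case by case from the list above), so no occurrence of $w$ blocks the resulting walk, and since $w \notin C$, $w$ also does not help block it; the remaining non-endpoint vertices of the expanded walk are exactly those of $\rho$, with the same collider/non-collider status and the same membership in $C$, so the walk is still unblocked given $C$. Reduce to a path as before; the type is preserved by the same local bookkeeping.

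\medskip

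\noindent\textbf{Main obstacle.} The $\mconn$ case is already in the literature \citep{verma1990equivalence,evans2018markov}, so the real content is that the \emph{refined} types survive. The delicate point — and the step I expect to require the most care — is the collider normalization in Step 1: one must argue that any path unblocked given $C$ can be replaced by one on which all colliders actually lie in $C$, \emph{without} changing whether it is a confounding path or a directed path, and in particular that re-routing a collider $v$ through a directed path $v \rdpath c$ with $c \in C$ does not accidentally create a new collider outside $C$ or shorten the path in a way that destroys a needed arrowhead at an endpoint. I would isolate this as a preparatory lemma (``every relation in \Cref{def:conf-relation} is witnessed by a path whose colliders all lie in $C$''), prove it by taking, among all witnessing paths, one that minimizes total length, and then the standard argument shows a collider not in $C$ leads to a strictly shorter witness — contradiction. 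With that lemma in hand, Steps 1 and 2 become the routine edge-by-edge contraction/expansion sketched above, and the four equivalences follow simultaneously.
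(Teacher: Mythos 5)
Your high-level plan (contract and expand edges across a marginalization, after first normalizing the witnessing object so that every collider actually lies in $C$) is close in spirit to the paper's proof, but the preparatory lemma you rely on is false as stated, and it fails precisely at the point you flag as the main obstacle. You claim that every relation in \cref{def:conf-relation} is witnessed by a \emph{path} all of whose colliders lie in $C$, to be proved by a minimal-length argument. Consider the graph with edges $A \rightarrow D$, $B \rightarrow D$, $D \rightarrow c$ and $C = \{c\}$ (this is $\gG_1$ in \cref{fig:path-m-conn-relative}). The only path from $A$ to $B$ is $A \rightarrow D \leftarrow B$; it is ancestrally unblocked given $C$ and witnesses $A \confpath B \mid C$, yet its unique collider $D$ is not in $C$, and there is no other witness, let alone a shorter one. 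The normalization you want is only available for \emph{walks}: one must detour from $D$ down to $c$ and back, which revisits vertices. This is exactly why the paper routes the entire argument through unblocked simple walks under the non-ancestral notion of blocking (\cref{def:walk-m-conn}), proving separately that ancestrally unblocked paths of each type correspond to unblocked simple walks of the same type (\cref{lem:walk-path}) and that the latter are preserved by marginalization (\cref{lem:walk-preserve}).

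Once you do work with walks, the step you treat as routine --- ``reduce to a path as before; the type is preserved'' --- becomes the real content. Deduplicating a walk can turn a non-collider into a collider at the splice point, and it can destroy an endpoint arrowhead: in the graph $c \rightarrow A \rightarrow B$ the walk $A \leftarrow c \rightarrow A \rightarrow B$ is unblocked given $\emptyset$, but there is no ancestrally unblocked path from $A$ to $B$ beginning with an arrowhead at $A$ (\cref{rem:simple-walk}). The paper's proof of \cref{lem:walk-path} handles this by restricting to simple walks (endpoints visited only once), by a ``restart at the last occurrence of $A$ / stop at the first occurrence of $B$'' operation to restore simplicity after inserting collider detours, and by verifying that the deduplication map preserves the invariants that all occurrences of a vertex have the same collider status and that all colliders lie in $\overline{\an}(C)$. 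None of this is supplied by your sketch. Two smaller points: your list of preimages of a bidirected edge under single-vertex marginalization wrongly includes $u_1 \leftrightarrow w \leftrightarrow u_2$, on which $w$ is a collider, so by the definition of latent projection it does not generate a bidirected edge in the marginal graph (and would contradict your claim that $w$ only appears as a non-collider); and the reduction to single-vertex marginalizations relies on latent projection being compositional, which you should at least state and justify.
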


This immediately implies the next result.

\begin{corollary} \label{cor:preserve-two-projections}
Let $\gG, \gG'$ be two marginal graphs of the same ADMG onto different
subsets of vertices. For distinct vertices $A,B$ and vertex set $C$
($C \cap \{A,B\} = \emptyset$) that exist on both graphs, we have
\[
  A \begin{Bmatrix} \rdpath \\ \confarc \\
      \confpath \\ \mconn \end{Bmatrix} B \mid C \ingraph{\gG} \quad \iff \quad
    A \begin{Bmatrix} \rdpath \\ \confarc \\
        \confpath \\ \mconn \end{Bmatrix} B \mid C \ingraph{\gG'}.
\]
\end{corollary}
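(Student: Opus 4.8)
The plan is to deduce \Cref{cor:preserve-two-projections} from \Cref{thm:preserve} by routing both marginal graphs through their common source. By hypothesis there is an ADMG $\gG_0$ over a vertex set $V_0$ together with subsets $V_1, V_2 \subseteq V_0$ such that $\gG = \gG_0(V_1)$ and $\gG' = \gG_0(V_2)$. The assumption that $A$, $B$, and the vertices of $C$ all exist on both $\gG$ and $\gG'$ means exactly that $\{A,B\}\cup C \subseteq V_1 \cap V_2$; together with $A \neq B$ and $C \cap \{A,B\} = \emptyset$, this ensures that each of the four relations $A \{\rdpath, \confarc, \confpath, \mconn\} B \mid C$ is well-defined in $\gG_0$, in $\gG$, and in $\gG'$.

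First I would apply \Cref{thm:preserve} with the ambient ADMG taken to be $\gG_0$, the distinguished subset taken to be $V_1$ (so that $\gG_0(V_1) = \gG$), and the triple $(A,B,C)$ as given. The hypothesis $\{A,B\} \cup C \subseteq V_1 \subseteq V_0$ of that theorem holds, so for each of the four path types we obtain
\[
  A \begin{Bmatrix} \rdpath \\ \confarc \\ \confpath \\ \mconn \end{Bmatrix} B \mid C \ingraph{\gG_0}
  \quad \iff \quad
  A \begin{Bmatrix} \rdpath \\ \confarc \\ \confpath \\ \mconn \end{Bmatrix} B \mid C \ingraph{\gG}.
\]
Then I would repeat the same argument verbatim with $V_1$ replaced by $V_2$, obtaining the analogous equivalence between $\gG_0$ and $\gG' = \gG_0(V_2)$, whose hypothesis $\{A,B\}\cup C \subseteq V_2 \subseteq V_0$ again holds. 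Chaining the two equivalences through $\gG_0$ yields the asserted equivalence between $\gG$ and $\gG'$ for all four path types simultaneously.

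The only point requiring a word of care is why one must pass through $\gG_0$ rather than trying to invoke \Cref{thm:preserve} on the pair $(\gG, \gG')$ directly: in general $\gG'$ is not itself a marginalization of $\gG$ (it may retain vertices of $V_0 \setminus V_1$), so \Cref{thm:preserve} simply does not apply to $(\gG, \gG')$, whereas it does apply to each of $(\gG_0, \gG)$ and $(\gG_0, \gG')$. Beyond this observation there is no real obstacle --- the corollary is a two-step consequence of \Cref{thm:preserve}, and nothing further from \Cref{sec:m-conn} is needed. One could equally route through the (typically smaller) intermediate graph $\gG_0(V_1 \cup V_2)$ instead of $\gG_0$, applying \Cref{thm:preserve} once to relate $\gG_0(V_1 \cup V_2)$ to $\gG$ and once to relate it to $\gG'$, but this changes nothing in the argument.
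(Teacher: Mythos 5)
Your proposal is correct and matches the paper's (implicit) argument: the paper simply remarks that \Cref{cor:preserve-two-projections} ``immediately'' follows from \Cref{thm:preserve}, and the intended derivation is exactly your two applications of that theorem --- once for each marginal graph relative to the common underlying ADMG --- chained through the source graph. Your additional remark about why one cannot apply \Cref{thm:preserve} directly to the pair $(\gG,\gG')$ is a sensible clarification but adds nothing beyond the paper's reasoning.
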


\begin{remark}
\cref{thm:preserve} can be further strengthened as follows:
connections by arcs (i.e., without colliders) are preserved in
marginalization with matching endpoint arrowheads and tails;
connections by general paths are also preserved by marginalization,
but only \emph{endpoint arrowheads}
are preserved. This means that, for example, the following implication
with only matching endpoint arrowheads is true (this can be seen from our proof
of \Cref{thm:preserve}):
\[
  \sP[A \nosquigfull \ast \fullsquigfull B \mid_a C \ingraph{\gG}]
  \neq \emptyset
  \quad \implies \quad \sP[\halfsquigfull \ast \fullsquigfull B\mid_a C
  \ingraph{\gG(\tilde{\sV})}] \neq \emptyset,
\]
but, in general,
\[
  \sP[A \nosquigfull \ast \fullsquigfull B \mid_a C \ingraph{\gG}]
  \neq \emptyset
  \quad \centernot{\implies} \quad \sP[A \nosquigfull \ast
  \fullsquigfull B \mid_a C  \ingraph{\gG(\tilde{\sV})}] \neq
  \emptyset.
\]
To illustrate the last point, consider
\cref{fig:non-preserve}, where $\gG', \gG''$ are successive marginals
of $\gG$, and $\gF'$ is a marginal graph of $\gF$. We have the following
observations:
\begin{enumerate}[(i)]
\item The statement $$\sP[A \nosquigfull \ast \fullsquigno B \mid_a C]
  \neq \emptyset$$ holds in $\gG$
  (path $A \rdedge E \bdedge F \ldedge B$) and $\gG'$
  (path $A \rdedge C \bdedge F \ldedge B$) but not in
  $\gG''$.

\item The statement $$\sP[A \nosquigfull \ast \fullsquigfull D
  \mid_a C] \neq \emptyset$$
  holds in $\gG$ (path $A \rdedge E \bdedge F
  \ldedge B \rdedge D)$ and $\gG'$ (path $A \rdedge C
  \bdedge F \ldedge B \rdedge D$), but not in
  $\gG''$. Instead, $\sP[A \confpath D \mid_a C] \neq \emptyset$ holds
  in $\gG''$ (path $A \ldedge B \rdedge D$).

\item With wildcard `+' denoting one or more colliders, the statement
  $$\sP[A \confarc + \confarc B \mid_a C] \neq \emptyset$$ holds in
  $\gF$ (path $A \ldedge D \bdedge E \bdedge B$) but not in $\gF'$.
\end{enumerate}
\end{remark}

\begin{figure}[t]
\centering
\begin{tikzpicture}
\tikzset{rv/.style={circle,inner sep=1pt,draw,font=\sffamily},
lv/.style={circle,inner sep=1pt,fill=gray!50,draw,font=\sffamily},
fv/.style={rectangle,inner sep=1.5pt,fill=gray!20,draw,font=\sffamily},
node distance=12mm, >=stealth}
\begin{scope} \node[name=A, rv]{$A$};
\node[name=E, rv, right of=A]{$E$};
\node[name=F, rv, right of=E]{$F$};
\node[name=B, rv, right of=F]{$B$};
\node[name=D, rv, right of=B]{$D$};
\node[name=C, rv, below of=A, xshift=5mm]{$C$};
\draw[->, thick] (A) to (E);
\draw[<->, thick] (E) to (F);
\draw[<-, thick] (F) to (B);
\draw[->, thick] (B) to (D);
\draw[->, thick, bend right] (F) to (A);
\draw[->, thick] (A) to (C);
\draw[->, thick] (E) to (C);
\node[right of=C, xshift=10mm, yshift=-7mm]{$\gG$};
\end{scope} \begin{scope}[xshift=6cm] \node[name=A, rv]{$A$};
\node[name=F, rv, right of=A]{$F$};
\node[name=B, rv, right of=F]{$B$};
\node[name=D, rv, right of=B]{$D$};
\node[name=C, rv, below of=A, xshift=5mm]{$C$};
\draw[->, thick] (F) to (A);
\draw[->, thick] (B) to (F);
\draw[->, thick] (B) to (D);
\draw[->, thick] (A) to (C);
\draw[<->, thick] (F) to (C);
\node[right of=C, xshift=2mm, yshift=-7mm]{$\gG'$};
\end{scope} \begin{scope}[xshift=11cm] \node[name=A, rv]{$A$};
\node[name=B, rv, right of=A]{$B$};
\node[name=D, rv, right of=B]{$D$};
\node[name=C, rv, below of=A, xshift=5mm]{$C$};
\draw[->, thick] (B) to (A);
\draw[->, thick] (B) to (D);
\draw[->, thick] (A) to (C);
\draw[<->, thick, bend right] (A) to (C);
\node[right of=C, xshift=-3mm, yshift=-7mm]{$\gG''$};
\end{scope} \begin{scope}[yshift=-3cm] \node[name=A, rv]{$A$};
\node[name=D, rv, right of=A]{$D$};
\node[name=E, rv, right of=D]{$E$};
\node[name=B, rv, right of=E]{$B$};
\node[name=C, rv, below of=D]{$C$};
\draw[->, thick] (D) to (A);
\draw[<->, thick, bend left] (D) to (E);
\draw[->, thick] (E) to (D);
\draw[<->, thick] (E) to (B);
\draw[->, thick] (D) to (C);
\node[right of=C, xshift=-5mm, yshift=-7mm]{$\gF$};
\end{scope} \begin{scope}[yshift=-3cm, xshift=6cm] \node[name=A, rv]{$A$};
\node[name=D, rv, right of=A]{$D$};
\node[name=B, rv, right of=D]{$B$};
\node[name=C, rv, below of=D]{$C$};
\draw[->, thick] (D) to (A);
\draw[<->, thick] (D) to (B);
\draw[->, thick] (D) to (C);
\node[below of=C, yshift=4mm]{$\gF'$};
\end{scope} \end{tikzpicture}
\caption{$\gG'$ is a marginal of $\gG$ and $\gG''$ is a further
  marginalization of $\gG'$; $\gF'$ is a marginal of $\gF$.}
\label{fig:non-preserve}
\end{figure}

\subsection{Proof of \Cref{thm:preserve}}
\label{sec:proof-crefthm:pr}

We prove \Cref{thm:preserve} by re-characterizing these relations
using unblocked \emph{simple walks} with a slightly different notion
of blocking, denoted as $\sW^s(\cdot \mid \cdot)$. In particular, the
first equivalence in \Cref{thm:preserve} follows from
\Cref{lem:walk-path,lem:walk-preserve} below via the following
equivalence diagram (denote $\tilde{\gG} := \gG(\tilde{V})$):
  \begin{center} \setlength{\tabcolsep}{3pt}
  \begin{tabular}{ccccc}
    $A \rdpath B \mid C\ingraph{\gG}$ & $\overset{\text{Def.}\ \ref{def:conf-relation}}{\iff}$
    & $\sP[A \rdpath B \mid_a C \ingraph{\gG}] \neq \emptyset$
    & $\overset{\text{Lem.}\ \ref{lem:walk-path}}{\iff}$
      & $\sW^s[A \rdpath B \mid C \ingraph{\gG}] \neq
        \emptyset$ \\
                                    & & & & \qquad ${\bigg \Updownarrow}$
                                            \footnotesize Lem.\
                                            \ref{lem:walk-preserve}
    \\
    $A \rdpath B \mid C \ingraph{\,\tilde{\gG}}$ & $\overset{\text{Def.}\ \ref{def:conf-relation}}{\iff}$
    & $\sP[A \rdpath B \mid_a C \ingraph{\,\tilde{\gG}}] \neq \emptyset$
    & $\overset{\text{Lem.}\ \ref{lem:walk-path}}{\iff}$
      & $\sW^s[A \rdpath B \mid C \ingraph\,{\tilde{\gG}}] \neq
        \emptyset$.
  \end{tabular}
\end{center}
Here, $\sW^s[A \rdpath B \mid C \ingraph{\gG}]$ is the collection of
all simple unblocked directed walks from $A$ to $B$ given $C$. Recall
that a walk is a sequence of adjacent edges of any type or
orientation; the notions of (non-ancestral) blocking and simple walks
will be introduced shortly. All other equivalences in
\Cref{thm:preserve} can be proved in exactly the same way.

We briefly explain why reasoning with walks instead of paths
simplifies the proof. Given a path with colliders, one may have to
look beyond the path to determine whether it is ancestrally
blocked. That is, the notion of ancestrally blocked paths is
\emph{relative} to the graph:
the same path may be ancestrally blocked in one graph but not blocked
in another. For example, in
\Cref{fig:path-m-conn-relative}, the path $A \rdedge D
\ldedge B$ is not ancestrally blocked given $C$ in $\gG_1$ but
ancestrally blocked given $C$ in $\gG_2$. This issue can be avoided by
using a slightly different notion of blocking that does not involve
the descendants of any collider. This notion of blocking is used in
the well-known Bayes ball algorithm
\citep{shachterBayesballRationalPastime1998}; see also
\citet{van2019separators}.

\begin{figure}[t]
  \centering
  \begin{subfigure}[b]{0.25\textwidth} \centering
\begin{tikzpicture}
\tikzset{rv/.style={circle,inner sep=1pt,draw,font=\sffamily},
lv/.style={circle,inner sep=1pt,fill=gray!50,draw,font=\sffamily},
fv/.style={rectangle,inner sep=1.5pt,fill=gray!20,draw,font=\sffamily},
node distance=12mm, >=stealth}
\node[name=A, rv]{$A$};
\node[name=D, rv, right of=A]{$D$};
\node[name=B, rv, right of=D]{$B$};
\node[name=C, rv, below of=D]{$C$};
\draw[->, thick] (A) to (D);
\draw[->, thick] (B) to (D);
\draw[->, thick] (D) to (C);
\end{tikzpicture}
\caption{$\gG_1$}
\end{subfigure}
  \begin{subfigure}[b]{0.25\textwidth} \centering
\begin{tikzpicture}
\tikzset{rv/.style={circle,inner sep=1pt,draw,font=\sffamily},
lv/.style={circle,inner sep=1pt,fill=gray!50,draw,font=\sffamily},
fv/.style={rectangle,inner sep=1.5pt,fill=gray!20,draw,font=\sffamily},
node distance=12mm, >=stealth}
\node[name=A, rv]{$A$};
\node[name=D, rv, right of=A]{$D$};
\node[name=B, rv, right of=D]{$B$};
\node[name=C, rv, below of=D]{$C$};
\draw[->, thick] (A) to (D);
\draw[->, thick] (B) to (D);
\end{tikzpicture}
\caption{$\gG_2$}
  \end{subfigure}
\caption{The difference between
  blocking and ancestral blocking.}
\label{fig:path-m-conn-relative}
\end{figure}

\begin{definition}[Blocked walk] \label{def:walk-m-conn}
A walk is said to be \emph{blocked} given a vertex set $C$ if
there exists a non-collider on the walk contained in $C$ or a collider
on the walk not contained in $C$.
\end{definition}

Thus in \Cref{fig:path-m-conn-relative}, the walk $A \rdedge D
\ldedge B$ is blocked given $C$ in both $\gG_1$ and
$\gG_2$. Yet, the walk $A \rdedge D \rdedge C \ldedge
D \ldedge B$ (present in $\gG_1$ but not in $\gG_2$) is not blocked
given $C$, irrespective of the graph that the walk resides in.

Let $\sW[A \mconn B \mid C \ingraph{\gG}]$ denote all unblocked walks
from $A$ to $B$ given $C$ in an ADMG $\gG$. Other types of unblocked
walks are defined in a similar way. Specifically, we write
\begin{align*}
  \sW[A \confpath B \mid C] &:= \{\pi \in \sW[A \mconn B
                                    \mid C]:  \text{$\pi$
  has two endpoint arrowheads}\}, \\
  \sW[A \confarc B \mid C] &:= \{\pi \in \sW[A
  \confpath B \mid C]: \pi \text{ has no collider} \}.
\end{align*}
Note that $\sP[A \confarc B \mid_a C] \subseteq \sW[A
\confarc B \mid C]$ is true because confounding arcs have no
colliders, but
$\sP[A \confpath B \mid_a C] \not \subseteq \sW[A \confpath B \mid C]$ due
to the different notions of blocking being applied. Nevertheless,
it is shown below that the
  existence of ancestrally unblocked paths, as one may
  expect, is equivalent to the existence of unblocked walks.

Since we are interested in the preservation of endpoint arrowheads by
marginalization, we will focus on \emph{simple walks}; see
\cref{rem:simple-walk} below.

\begin{definition} \label{def:walk-simple}
A walk is \emph{simple} if its endpoints are visited only
once by the walk.
\end{definition}

An anonymous referee pointed out that the proof below would still go through if
it is further required that a simple walk contains no repeated edge in
the same direction (in which case the walk must have finite length).

We add a superscript `$s$' to $\sW[\cdots]$ to indicate that only simple walks are considered.
For example, $\sW^s[A \confpath B \mid C \ingraph{\gG}]$ denotes all
unblocked confounding simple walks in $\gG$ from $A$ to $B$ given $C$,
i.e.,
\[
  \sW^s[A \confpath B \mid C \ingraph{\gG}] := \left\{\pi \in \sW[A \confpath
  B \mid C \ingraph{\gG}]:\text{$\pi$ is simple}\right\}.
\]
It is easy to see that there exists an unblocked walk between
$A$ and $B$ given $C$ if and only if there exists an unblocked
simple walk between $A$ and $B$ given $C$, that is,
\[
    \sW_{\gG}(A \mconn B \mid C) = \emptyset
    \quad \iff \quad \sW_{\gG}^s(A \mconn B \mid C) = \emptyset.
  \]

As mentioned earlier, we prove \cref{thm:preserve} by applying the
next two key lemmas.
\cref{lem:walk-path} shows that the existence of certain types of
paths that are not ancestrally blocked is equivalent to that of
unblocked simple walks. \cref{lem:walk-preserve} shows that certain
types of unblocked simple walks are preserved by marginalization. The
proofs of these lemmas are deferred to \Cref{apx:proofs}.

\begin{lemma} \label{lem:walk-path}
  Let $\gG$ be an ADMG over vertex set $\sV$. For any distinct $A,
  B \in \sV$ and $C \subseteq \sV \setminus \{A,B\}$, we have
  \[
  \sP \left[  A
  \begin{Bmatrix} \rdpath  \\ \confarc \\
    \confpath \\ \mconn \end{Bmatrix} B \mid_a C \ingraph{\gG} \right] \neq \emptyset
\quad \iff \quad
  \sW^s \left[ A
  \begin{Bmatrix} \rdpath  \\ \confarc \\
    \confpath \\ \mconn \end{Bmatrix} B \mid C \ingraph{\gG} \right]
\neq \emptyset.
  \]
\end{lemma}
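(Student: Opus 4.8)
The plan is to prove the two directions separately and to treat all four path types by one uniform argument; as it turns out, once the machinery is in place the cases $\rdpath$ and $\confarc$ are essentially trivial, since arcs have no colliders, and all the work is in the colliderful cases $\confpath$ and $\mconn$.

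\textbf{Simple walks to paths.} For the implication $\sW^s \neq \emptyset \Rightarrow \sP[\,\cdot\mid_a C\,] \neq \emptyset$, I would first note that strict (non-ancestral) unblocking is stronger than ancestral unblocking: if every collider of a walk lies in $C$ and every non-collider lies outside $C$, then in particular the walk is not ancestrally blocked, because a collider in $C$ lies in $\overline{\an}(C)$. So a strict-unblocked simple walk is an ancestrally-unblocked walk, and it remains to reduce an ancestrally-unblocked walk to an ancestrally-unblocked path of the same type without disturbing its endpoints. This is the familiar walk-to-path reduction: while some vertex repeats, excise the loop between its first and last occurrence. The only vertex whose collider status can change is the splice point $v$; if excision turns $v$ into a collider then $v\notin C$ forces the excised loop to leave $v$ with a tail at both ends, and the loop, being a cycle in an ADMG, must contain a collider reachable from $v$ along a directed subpath, so $v\in\an(C)\subseteq\overline{\an}(C)$ and the shortened walk is still ancestrally unblocked. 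Since the original walk is simple, its endpoints are never touched, so endpoint arrowheads are preserved, and no collider is ever created when the walk is an arc, so arc-ness is preserved; iterating yields a path of the required type.

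\textbf{Paths to simple walks.} For the converse, the $\rdpath$ and $\confarc$ cases are immediate: an ancestrally-unblocked arc has no colliders, hence no non-collider in $C$, hence is itself a strict-unblocked simple walk. For $\confpath$ and $\mconn$, start from an ancestrally-unblocked path $\pi$ of the given type and ``legalize'' each collider $v$ with $v\in\an(C)\setminus C$: choose a shortest directed path $\rho$ from $v$ to $C$ (so $\rho$ meets $C$ only at its terminus $c$) and splice the there-and-back detour $v \rdpath c \ldpath v$ into $\pi$ at $v$. In the resulting walk $v$ is no longer a collider (one $\pi$-edge points into $v$, then a tail leads out along $\rho$), the new collider $c$ lies in $C$, and every interior vertex of the detour is a non-collider outside $C$; so the splice removes one ``bad'' collider and creates none. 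Performing this at every bad collider produces a walk from $A$ to $B$ that is strict-unblocked and of the same type as $\pi$.

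\textbf{The main obstacle} is simplicity: such a detour could pass through the endpoint $A$ or $B$, so the walk above need not visit $A,B$ only once. I expect this to require a dedicated surgery. If a shortest directed path $\rho$ from a bad collider $v$ to $C$ passes through, say, $A$, then its initial segment is a directed path $v\rdpath A$; reversing it gives $A\rdpath v$ read backwards, and replacing the sub-path of $\pi$ from $A$ to $v$ by this reversed segment again makes $v$ a non-collider, leaves an arrowhead at $A$, and strictly shortens the $A$-side of the walk. One then argues, by strong induction on a suitable potential — for instance the number of bad colliders together with total walk length, lexicographically — that finitely many detour insertions and reroutings (the reroute move taking priority whenever a detour would otherwise hit an endpoint) terminate in a strict-unblocked walk that visits $A$ and $B$ once, i.e.\ a simple walk; the referee's observation that one may additionally forbid repeating a directed edge is exactly what makes this process finite. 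The careful bookkeeping needed to lower the potential while keeping the walk simple and of the right type is the delicate part, and I would isolate it as a standalone induction before assembling the four equivalences.
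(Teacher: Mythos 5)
Your proposal is correct and follows essentially the same route as the paper's proof: the same loop-excision argument for the walk-to-path direction (including the key observation that a splice point that becomes a collider must have a directed path into a collider of the excised loop, hence lies in $\overline{\an}(C)$), and the same collider-detour insertion plus endpoint reroute for the path-to-walk direction. The ``delicate'' termination bookkeeping you defer is handled in the paper simply by restarting the walk at the last occurrence of $A$ (resp.\ truncating at the first occurrence of $B$) inside an offending detour --- which coincides with your reroute move and strictly shortens the walk at each step, so no elaborate lexicographic potential is needed.
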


\begin{lemma} \label{lem:walk-preserve}
Let $\gG$ be an ADMG over vertex set $V$. For any  distinct $A, B \in
\sV$, $C \subseteq V \setminus \{A,B\}$ and vertex set $\tilde{V}$ such
that $\{A,B\} \cup C \subseteq \tilde{V} \subseteq V$, we have, \[
  \sW^s \left[  A \begin{Bmatrix} \rdpath \\ \confarc \\
      \confpath \\ \mconn \end{Bmatrix} B \mid C \ingraph{\gG} \right]
  \neq \emptyset ~~
  \iff ~~
  \sW^s \left[ A \begin{Bmatrix} \rdpath \\ \confarc \\
      \confpath \\ \mconn \end{Bmatrix} B \mid C \ingraph{\gG(\tilde{V})} \right] \neq \emptyset.
\]
\end{lemma}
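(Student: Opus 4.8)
The plan is to prove both equivalences by explicitly transforming an unblocked simple walk of the relevant type in one graph into one in the other. Write $\sU := V \setminus \tilde{V}$ and observe that $\sU \cap C = \emptyset$ because $C \subseteq \tilde{V}$. The pivotal fact is that \emph{no vertex of $\sU$ can be a collider on a walk that is unblocked given $C$}: a collider outside $C$ blocks a walk (\cref{def:walk-m-conn}), so on an unblocked walk every $\sU$-vertex is a non-collider. From this I extract the structural observation that drives the whole argument: if $P, Q \in \tilde{V}$ are joined by an arc all of whose non-endpoint vertices lie in $\sU$, then --- after deleting any closed sub-walks, which leaves its endpoints and their endpoint orientations untouched --- it may be taken to be a \emph{path}, and as such it is one of a directed path $P \rdpath Q$, a directed path $Q \rdpath P$, or a confounding arc $P \confarc Q$; in each case the definition of marginalization (\cref{sec:latent-projection}) puts, respectively, an edge $P \rdedge Q$, $P \ldedge Q$, or $P \bdedge Q$ into $\gG(\tilde{V})$ whose orientation at $P$ and at $Q$ matches that of the arc. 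Conversely, again by the definition of marginalization, each directed edge $P \rdedge Q$ of $\gG(\tilde{V})$ is witnessed by a directed path $P \overset{\text{via}\,\sU}{\rdpath} Q$ in $\gG$ and each bidirected edge $P \bdedge Q$ by a confounding arc $P \overset{\text{via}\,\sU}{\confarc} Q$, again with matching endpoint orientations (an edge already present in $\gG$ being its own witness).

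For the direction ``$\Rightarrow$'', take an unblocked simple walk $\pi$ of the given type from $A$ to $B$ given $C$ in $\gG$, and list the vertices of $\tilde{V}$ it visits in order as $A = t_0, t_1, \dots, t_m = B$. Between consecutive $t_{j-1}$ and $t_j$ the walk traverses a sub-walk $\pi_j$ whose non-endpoint vertices all lie in $\sU$ and are therefore non-colliders, so $\pi_j$ is an arc; by the structural observation it yields an edge $e_j$ of $\gG(\tilde{V})$ between $t_{j-1}$ and $t_j$ with the same endpoint orientations. Concatenating $e_1, \dots, e_m$ gives a walk $\pi'$ from $A$ to $B$ in $\gG(\tilde{V})$. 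Because $\pi$ visits $A$ and $B$ only once, so does $\pi'$, hence $\pi'$ is simple; because the orientation at each $t_j$ is preserved, $t_j$ is a collider on $\pi'$ exactly when it is one on $\pi$, and since $\pi'$ has no other vertices it follows that every non-collider of $\pi'$ lies outside $C$ and every collider of $\pi'$ lies in $C$, i.e.\ $\pi'$ is unblocked given $C$; and the type is preserved since a directed $\pi$ gives directed $e_j$'s, two endpoint arrowheads on $\pi$ give two on $\pi'$, and no collider on $\pi$ gives no collider on $\pi'$. The direction ``$\Leftarrow$'' is the inverse construction: given an unblocked simple walk $\pi'$ in $\gG(\tilde{V})$, replace each of its edges by a witnessing directed path or confounding arc through $\sU$ in $\gG$ and concatenate. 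The newly inserted vertices lie in $\sU$, hence are outside $C$ and (lying inside a witnessing arc, which has no collider) are non-colliders; the junction vertices are precisely the vertices of $\pi'$ and keep their endpoint orientations; so, by the same bookkeeping, the resulting walk is simple, unblocked given $C$, and of the same type as $\pi'$.

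The verifications that ``simple'', ``unblocked given $C$'' and ``directed / confounding / no collider'' are transferred by these two constructions are routine once the above two facts are in hand; the real content is (i) that an unblocked walk has no $\sU$-collider and (ii) the orientation-preserving correspondence between $\sU$-segments and edges of $\gG(\tilde{V})$, which is exactly where the definition of the marginal graph is used and which is responsible for the preservation of endpoint arrowheads (while, by the remark after \cref{thm:preserve}, the interior structure of a confounding path is \emph{not} preserved). The point I expect to require the most care is turning a $\sU$-segment --- which is an arc \emph{walk}, since the interior of a simple walk may repeat vertices --- into the arc \emph{path} demanded by the marginalization rules: one checks that any closed sub-walk of an arc points into its basepoint along at least one of its two end-edges, so excising such loops never creates a new collider and therefore produces an arc path with the same endpoints and endpoint orientations. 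Alternatively, one can marginalize $\sU$ out one vertex at a time, using that marginalization composes, so that every $\sU$-segment has length two and this subtlety disappears.
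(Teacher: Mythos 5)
Your proof is correct and follows essentially the same route as the paper's: decompose the unblocked simple walk at its $\tilde{V}$-vertices, observe that each $\sU$-segment is an arc (since $\sU \cap C = \emptyset$ forbids $\sU$-colliders on an unblocked walk) and hence corresponds to an edge of $\gG(\tilde{V})$ with matching endpoint orientations, and invert this correspondence for the converse. The only difference is that you explicitly handle the reduction of an arc \emph{walk} through $\sU$ to the arc \emph{path} required by the definition of marginalization, a point the paper's proof passes over silently.
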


\begin{remark} \label{rem:simple-walk}
For an unblocked walk that is not simple, there may not exist a
corresponding ancestrally unblocked path with matching endpoint
arrowheads. Consider the
graph $C \rdedge A \rdedge B$. Although the walk $A \ldedge C
\rdedge A \rdedge B$ is not blocked given the empty set, there is no
ancestrally unblocked path from $A$ to $B$ starting with `$A
\ldedge$'.
\end{remark}

\subsection{District criterion}
\label{sec:district-criterion}

From \Cref{thm:preserve}, we derive the following simple but useful
criterion for confounder selection. In what follows, we write `$A
\bdedge \ast \bdedge B$' if there exists a path from $A$ to $B$ consisting of bidirected edges only. Similarly, we write `$A \colliderconn B$' if there exists a path from
$A$ to $B$ on which every non-endpoint vertex is a collider.

\begin{corollary} \label{cor:preserve}
For an ADMG $\gG$, consider two distinct vertices $A, B$ and a vertex set $C$ such that $A, B \notin C$.
We have
\[
  A \begin{Bmatrix} \rdpath \\ \confarc \\
       \confpath \\ \mconn \end{Bmatrix} B \mid C \ingraph{\gG} \quad \iff
     \quad A \begin{Bmatrix} \rdedge \\ \bdedge \\
               \samedist \\ \colliderconn \end{Bmatrix} B \ingraph{\gG(\{A,B\} \cup C)}.
\]
\end{corollary}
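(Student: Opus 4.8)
The plan is to combine a single application of \Cref{thm:preserve} with an elementary analysis of ancestral blocking in the fully marginalized graph. First I would apply \Cref{thm:preserve} with $\tilde{V} := \{A,B\} \cup C$ (legitimate since $\{A,B\} \cup C \subseteq \tilde{V} \subseteq \sV$), which yields
\[
  A \begin{Bmatrix} \rdpath \\ \confarc \\ \confpath \\ \mconn \end{Bmatrix} B \mid C \ingraph{\gG}
  \quad \iff \quad
  A \begin{Bmatrix} \rdpath \\ \confarc \\ \confpath \\ \mconn \end{Bmatrix} B \mid C \ingraph{\mathscr{H}},
\]
where $\mathscr{H} := \gG(\{A,B\} \cup C)$. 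Hence it suffices to show that in $\mathscr{H}$ --- whose vertex set is exactly $\{A,B\} \cup C$, so that $C$ comprises \emph{all} vertices other than $A$ and $B$ --- each of the four relations conditioned on $C$ collapses to the corresponding condition on the right-hand side of the corollary, none of which mentions $C$.

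The engine of the argument is the observation that, since every non-endpoint vertex of any path from $A$ to $B$ in $\mathscr{H}$ lies in $C$, such a path is ancestrally unblocked given $C$ if and only if every one of its non-endpoint vertices is a collider. Indeed, a non-collider in $C$ blocks the path, so all non-endpoints must be colliders; conversely, each such collider is itself an element of $C$, so the ``collider, or a descendant of it, in $C$'' clause holds automatically, the endpoints $A, B$ are colliders on no path, and $A, B \notin C$, so no vertex blocks the path.

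Given this, the four cases are routine. For $\rdpath$: a directed path has no colliders, so by the observation it is unblocked iff it has no non-endpoint vertex, i.e., iff it is the edge $A \rdedge B$, giving $A \rdpath B \mid C \ingraph{\mathscr{H}} \iff A \rdedge B \ingraph{\mathscr{H}}$. For $\confarc$: a confounding arc also has no colliders, so the same reasoning forces it to be the single edge $A \bdedge B$, giving $A \confarc B \mid C \ingraph{\mathscr{H}} \iff A \bdedge B \ingraph{\mathscr{H}}$. For $\confpath$: an unblocked confounding path has all non-endpoints as colliders and has an arrowhead at $A$ and at $B$; then every edge on it carries an arrowhead at each of its two endpoints --- at an interior vertex because that vertex is a collider, and at $A$ or $B$ because the path is confounding --- and an edge of an ADMG with arrowheads at both endpoints is bidirected, so the path is entirely bidirected; conversely any bidirected path from $A$ to $B$ meets both requirements, so $A \confpath B \mid C \ingraph{\mathscr{H}}$ holds iff $A$ and $B$ lie in the same district of $\mathscr{H}$. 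For $\mconn$: the observation says directly that $A \mconn B \mid C \ingraph{\mathscr{H}}$ holds iff there is a path from $A$ to $B$ all of whose non-endpoint vertices are colliders, which is the relation $A \colliderconn B \ingraph{\mathscr{H}}$. Chaining these equivalences with the displayed consequence of \Cref{thm:preserve} proves the corollary.

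The one place that requires a little care is the $\confpath$ case: one must check that ``every interior vertex a collider'' together with ``arrowheads at both endpoints'' really does force every edge to be bidirected --- the point being that in an ADMG the only edge with an arrowhead at each of its endpoints is a bidirected edge, whereas a directed edge has a tail at one end. The remaining cases, and the reduction to ancestral blocking when the conditioning set is the entire complement of $\{A,B\}$, are pure bookkeeping.
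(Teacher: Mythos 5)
Your proposal is correct and follows essentially the same route as the paper: apply \Cref{thm:preserve} with $\tilde{V} = \{A,B\} \cup C$ and then observe that, since $C$ exhausts the non-endpoint vertices of $\gG(\{A,B\}\cup C)$, an unblocked path must have every interior vertex a collider, which collapses the four relations to the stated edge/district/collider-path conditions. Your explicit check in the $\confpath$ case that arrowheads at every edge-endpoint force each edge to be bidirected is a correct elaboration of a step the paper leaves implicit.
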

\begin{proof}
Apply \cref{thm:preserve} with $\tilde{V} = \{A,B\} \cup C$.  Because
$C$ contains all vertices other than $A,B$ in $\gG(\tilde{V})$, an
unblocked arc from $A$ to $B$ given $C$ must be a single directed
or bidirected edge. Further, any non-endpoint vertex on any
unblocked path from $A$ to $B$ given $C$ in $\gG(\tilde{V})$
must be a collider.
\end{proof}

We make two remarks on this result. First, recall that
in the literature of graphical models, a \emph{district} in an
ADMG is defined as a maximal set of vertices connected by bidirected edges (see, e.g., \citealp{evans2014markovian}). That
is, $A$ and $B$ are in the same district if and only if $A \samedist B$.
By the third equivalence relation in \Cref{cor:preserve}, an
adjustment set $S$ is sufficient for $A$ and $B$ if and only if $A$
and $B$ are in different districts of $\gG(\{A,B\}\cup S)$,
the smallest marginal graph. We
call this the \emph{district criterion} for confounder selection. It
provides a simple and useful way to check whether a set of selected
confounders is indeed sufficient. Second, the last equivalence
relation in \Cref{cor:preserve} says that $A$ and $B$ are m-connected
given $C$ if and only if $A$ and $B$ are ``collider-connected'' in the
smallest marginal graph. This concept naturally arises in considering
the inverse covariance matrix for linear structural equation models;
see
\textcite{kosterValidityMarkovInterpretation1999} and \textcite{zhaoMatrixAlgebraGraphical2024}.

\section{Iterative graph expansion} \label{sec:expand}
\label{sec:graph-expansion}

\subsection{Primary adjustment set}

A primary adjustment set is any adjustment set that blocks all the
\emph{confounding arcs} `$\confarc$' between two vertices.
They are the building blocks when we try to find sufficient adjustment sets
by iteratively expanding the graph, a procedure that is introduced in the
next subsection.

\begin{definition}[Primary adjustment set]
Let $\gG$ be an ADMG with vertex set $V$. Given two distinct vertices
$A,B \in V$ and a set $S \subseteq V \setminus \{A,B\}$,
an adjustment set $C$ for $A,B$ is called \emph{primary
  relative to $S$} if
\[\textnot A \confarc B \mid S \cup C \ingraph{\gG}.\] When $S =
\emptyset$, we simply say
that $C$ is a primary adjustment set for $A$ and $B$. Further, $C$ is called
\emph{minimal primary} (relative to $S$) if none of its proper subsets is
primary (relative to $S$).\end{definition}

By definition, if $C$ is a (minimal) primary adjustment set for $A$
and $B$, it is also a (minimal) primary adjustment set for $B$ and
$A$. Further, because any ancestrally unblocked confounding arc is an
ancestrally unblocked path (see
\eqref{eq:confarc-is-confpath}), any sufficient adjustment set must be
primary but the reverse may not be true. For
example, in the graph $A \bdedge D \bdedge B$, the adjustment set
$\{D\}$ for $A$ and $B$ is primary but not sufficient.

\begin{lemma}
  Given an ADMG $\gG$, if $C$ is a minimal primary adjustment set for
  $A$ and $B$ relative to $S$, then $C \cap S = \emptyset$ and $C
  \subseteq \an_{\gG}(A) \cup \an_{\gG}(B)$.
\end{lemma}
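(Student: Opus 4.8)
The plan is to prove the two claims separately, both by contradiction using the minimality of $C$.

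\medskip
\noindent\textbf{Claim 1: $C \cap S = \emptyset$.} Suppose for contradiction that some vertex $v \in C \cap S$. Consider $C' := C \setminus \{v\}$, a proper subset of $C$. Since $v \in S$, we have $S \cup C' = S \cup C$, so $\textnot A \confarc B \mid S \cup C' \ingraph{\gG}$ holds trivially because it equals the defining condition for $C$ being primary relative to $S$. But $C'$ must also be an adjustment set for $A,B$ — it contains no descendants of $A$ or $B$ since $C$ doesn't — so $C'$ is primary relative to $S$, contradicting the minimality of $C$. Hence $C \cap S = \emptyset$.

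\medskip
\noindent\textbf{Claim 2: $C \subseteq \an_{\gG}(A) \cup \an_{\gG}(B)$.} Suppose for contradiction that there is a vertex $v \in C$ with $v \notin \an_{\gG}(A) \cup \an_{\gG}(B)$. Set $C' := C \setminus \{v\}$; I want to show $C'$ is still primary relative to $S$, i.e.\ $\textnot A \confarc B \mid S \cup C'$, which contradicts minimality. Suppose not: then there is an ancestrally unblocked confounding arc $\pi \in \sP[A \confarc B \mid_a S \cup C']$. Since $C$ is primary relative to $S$, this same $\pi$ must be ancestrally blocked given $S \cup C$; the only way adding $v$ can block $\pi$ is if $v$ lies on $\pi$ as a non-collider (recall a confounding arc has no colliders at all, so $v$ is necessarily a non-collider on $\pi$). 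Write $\pi$ as $A \confarc B$ passing through $v$; because $\pi$ is an arc (no colliders) with two endpoint arrowheads, the subpath of $\pi$ from $v$ to $A$ is a directed path $v \rdpath A$ and the subpath from $v$ to $B$ is a directed path $v \rdpath B$ — one of these could be trivial, but not both since $v \neq A, B$. Either way $v \in \an_{\gG}(A) \cup \an_{\gG}(B)$, contradicting the choice of $v$. Therefore no such $\pi$ exists, $C'$ is primary relative to $S$, and minimality is violated.

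\medskip
\noindent The main obstacle is the structural argument in Claim 2: one has to use that a confounding arc is colliderless to conclude that any vertex on it splits the arc into two directed subpaths oriented \emph{away} from that vertex (because the arc has arrowheads at both ends and no collider in between, all arrowheads ``point outward'' from any interior vertex). Once this orientation fact is in hand, membership of $v$ in $\an_{\gG}(A) \cup \an_{\gG}(B)$ is immediate. A minor point to be careful about is that removing $v$ from $C$ cannot create \emph{new} ancestrally unblocked arcs other than by unblocking $\pi$ at $v$ itself — this is exactly the monotonicity of unblocked arcs (\Cref{lem:monotone}), which guarantees $\sP[A \confarc B \mid_a S\cup C'] \supseteq \sP[A \confarc B \mid_a S \cup C]$ with the only possible new members being paths through $v$; so the case analysis above is exhaustive.
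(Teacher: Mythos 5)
Your overall strategy is sound and, modulo packaging, matches the paper's: the paper trims an arbitrary primary set $\tilde C$ to $\tilde C\cap(\an(A)\cup\an(B))\setminus S$ in one step and invokes minimality, whereas you remove one offending vertex at a time; both arguments rest on the same key fact, namely that blocking of a confounding arc can only occur through a non-endpoint vertex of the arc, and every such vertex is an ancestor of $A$ or of $B$. Your Claim 1 is correct as written.

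However, the structural claim you use to establish that key fact in Claim 2 is false as stated. You assert that for an interior vertex $v$ of a confounding arc, \emph{both} the subpath from $v$ to $A$ and the subpath from $v$ to $B$ are directed paths out of $v$ (and you repeat this in your closing paragraph: ``all arrowheads point outward from any interior vertex''). This holds only when $v$ is the unique source of the arc. Counterexample: in the arc $A \leftarrow v \leftarrow U \rightarrow B$, the subpath from $v$ to $B$ is $v \leftarrow U \rightarrow B$, which is not a directed path, and the edge from $U$ into $v$ has its arrowhead pointing \emph{into} $v$. The correct statement, which is all you need, is that at least one of the two edges adjacent to $v$ has a tail at $v$ (because $v$ is a non-collider), and following that tail towards the corresponding endpoint, the absence of colliders forces every subsequent edge to point the same way, so that subwalk is a directed path from $v$ to $A$ or to $B$; hence $v\in\an_{\gG}(A)\cup\an_{\gG}(B)$. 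With that one-line repair (and dropping the ``one of these could be trivial, but not both since $v\neq A,B$'' justification, which is not the relevant dichotomy), your proof goes through.
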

\begin{proof}
  Suppose $\textnot A \confarc B \mid S \cup \tilde{C}$ for some
  $\tilde{C}$. Let $C = \tilde{C} \cap (\an(A) \cup \an(B)) \setminus
  S$. It is easy to show that any confounding arc $A \confarc B$ that
  is not blocked given $S \cup \tilde{C}$ remains not blocked given $S
  \cup C$. That is, $\textnot A \confarc B
  \mid S \cup C$. The result then follows from the definition of minimal
  primary adjustment set.
\end{proof}

\begin{theorem} \label{thm:completeness}
Let $\gG$ be an ADMG over vertex set $V$. Suppose $S \subset V$ is a
minimal sufficient adjustment set for distinct $X, Y \in V$. Then, for any
$\tilde{S} \subset S$, there exist distinct $Z_1,Z_2 \in \tilde{S} \cup
\{X,Y\}$ such that
\begin{equation} \label{eqs:completeness}
 Z_1 \bdedge Z_2 \ingraph{\gG(\tilde{S} \cup \{X,Y\})} \quad
 \text{and} \quad \textnot Z_1 \bdedge Z_2 \ingraph{\gG(S \cup
 \{X,Y\})}.
\end{equation}
  Moreover, there exists a non-empty minimal primary
  adjustment set $C \subseteq S \setminus \tilde{S}$ for $Z_1$ and $Z_2$
  relative to $\tilde{S} \setminus \{Z_1, Z_2\}$.
\end{theorem}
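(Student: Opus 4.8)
The plan is to read both assertions off the district criterion (\Cref{cor:preserve}) together with the minimality of $S$. First, since $S$ is sufficient, $\textnot X \confpath Y \mid S \ingraph{\gG}$, so by the district criterion $X$ and $Y$ lie in different districts of $\gG(S \cup \{X,Y\})$. Since $\tilde S \subsetneq S$ and $S$ is minimal, $\tilde S$ is not sufficient, so $X \confpath Y \mid \tilde S \ingraph{\gG}$ and hence $X$ and $Y$ lie in the same district of $\gG(\tilde S \cup \{X,Y\})$; fix a path $X = W_0 \bdedge W_1 \bdedge \cdots \bdedge W_k = Y$ of bidirected edges there. Now $\gG(\tilde S \cup \{X,Y\})$ is a further marginalization of $\gG(S \cup \{X,Y\})$ (marginalizing out $T := S \setminus \tilde S$), so every $W_i$ is a vertex of $\gG(S\cup\{X,Y\})$; if every edge $W_{i-1}\bdedge W_i$ survived in $\gG(S\cup\{X,Y\})$, then $X$ and $Y$ would be in the same district there. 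Taking $j$ to be the first index for which $W_{j-1}$ and $W_j$ lie in different districts of $\gG(S\cup\{X,Y\})$ and setting $Z_1 := W_{j-1}$, $Z_2 := W_j$ gives distinct $Z_1, Z_2 \in \tilde S \cup \{X,Y\}$ with $Z_1 \bdedge Z_2 \ingraph{\gG(\tilde S \cup \{X,Y\})}$ and $\textnot Z_1 \bdedge Z_2 \ingraph{\gG(S \cup \{X,Y\})}$, which is \eqref{eqs:completeness}.

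For the ``moreover'' part I would take the candidate $C_0 := S \setminus \tilde S = T$ and show it is a primary adjustment set for $Z_1, Z_2$ relative to $\tilde S \setminus \{Z_1,Z_2\}$, then pass to a minimal one. By \Cref{cor:preserve}, $\textnot Z_1 \bdedge Z_2 \ingraph{\gG(S\cup\{X,Y\})}$ is equivalent to $\textnot Z_1 \confarc Z_2 \mid (S \cup \{X,Y\}) \setminus \{Z_1,Z_2\} \ingraph{\gG}$. Since $(\tilde S \setminus \{Z_1,Z_2\}) \cup C_0 = S \setminus \{Z_1,Z_2\}$, the content of primality is exactly that one may drop $\{X,Y\}\setminus\{Z_1,Z_2\}$ from the conditioning set, i.e. $\textnot Z_1 \confarc Z_2 \mid S \setminus \{Z_1,Z_2\} \ingraph{\gG}$. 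Here the key structural fact is that every non-endpoint vertex on a confounding arc $Z_1 \confarc Z_2$ is a (proper) ancestor of $Z_1$ or of $Z_2$; hence a confounding arc unblocked given $S\setminus\{Z_1,Z_2\}$ but blocked upon adding $X$ or $Y$ would force $X$ or $Y$ to be a proper ancestor of $Z_1$ or $Z_2$, and since $Z_1,Z_2 \in S \cup \{X,Y\}$ and $S$ — being an adjustment set — contains no descendant of $X$ or $Y$ (together with the standing convention fixing the causal direction between $X$ and $Y$), this is impossible. Thus $C_0$ is primary relative to $\tilde S \setminus \{Z_1,Z_2\}$.

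Finally I would clean up: replace $C_0$ by $C_0 \cap (\an_{\gG}(Z_1)\cup\an_{\gG}(Z_2))$, which by the same ``arcs only visit ancestors'' observation is still primary and, by acyclicity, contains no descendant of $Z_1$ or $Z_2$, so it is a genuine adjustment set; then take a minimal primary subset $C \subseteq S \setminus \tilde S$. Non-emptiness follows because if $C = \emptyset$ were primary relative to $\tilde S \setminus \{Z_1,Z_2\}$ we would get $\textnot Z_1 \confarc Z_2 \mid \tilde S \setminus \{Z_1,Z_2\} \ingraph{\gG}$, whereas $Z_1 \bdedge Z_2 \ingraph{\gG(\tilde S \cup \{X,Y\})}$ gives, via \Cref{cor:preserve} and the same ``drop $\{X,Y\}$'' argument, $Z_1 \confarc Z_2 \mid \tilde S \setminus \{Z_1,Z_2\} \ingraph{\gG}$. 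The step I expect to be the real work is the removal of $X$ and $Y$ from the conditioning set in the case that one of $Z_1, Z_2$ is itself $X$ or $Y$: one must rule out a confounding arc between, say, $Y$ and $Z_2$ that threads through $X$, and this requires combining the ancestor structure of confounding arcs with the defining property of an adjustment set; the remaining ingredients (the district criterion, extraction of the bidirected path, the ancestor reduction, and non-emptiness) are routine.
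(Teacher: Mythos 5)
Your argument for the first conclusion is the paper's argument: both use the district criterion (\Cref{cor:preserve}) to extract a bidirected path $X \bdedge \dots \bdedge Y$ in $\gG(\tilde{S}\cup\{X,Y\})$ and take a consecutive pair whose bidirected edge does not survive in $\gG(S\cup\{X,Y\})$ (your ``first pair in different districts'' is an equivalent selection). Your plan for the ``moreover'' part --- show that $S\setminus\tilde{S}$ is primary relative to $\tilde{S}\setminus\{Z_1,Z_2\}$, then trim to a minimal non-empty subset --- is also exactly the paper's.

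However, the step you yourself single out as ``the real work,'' namely dropping $\{X,Y\}\setminus\{Z_1,Z_2\}$ from the conditioning set, is not closed by the ingredients you cite. You argue that a confounding arc between $Z_1$ and $Z_2$ with $X$ as a non-endpoint vertex would force $X$ to be a proper ancestor of $Z_1$ or $Z_2$, and that this is impossible. But when $Z_2=Y$ (which occurs whenever the failing edge is the last edge of the bidirected path), $X$ being a proper ancestor of $Z_2=Y$ is not impossible --- it is exactly what the standing convention $X\rdpath Y$ asserts. Concretely, take $\sV=\{X,Y,Z,V,W,A\}$ with edges $W\rdedge Z$, $W\rdedge X$, $X\rdedge Y$, $Z\rdedge Y$, $A\rdedge Z$, $A\rdedge V$, $V\rdedge Y$. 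The only confounding paths from $X$ to $Y$ are $X\ldedge W\rdedge Z\rdedge Y$ and $X\ldedge W\rdedge Z\ldedge A\rdedge V\rdedge Y$, so $S=\{Z,V\}$ is a minimal sufficient adjustment set; take $\tilde{S}=\{Z\}$. The only pair satisfying \eqref{eqs:completeness} is $(Z_1,Z_2)=(Z,Y)$, yet the confounding arc $Z\ldedge W\rdedge X\rdedge Y$ is unblocked given $\{V\}=S\setminus\tilde{S}$, so no non-empty $C\subseteq S\setminus\tilde{S}$ is primary for $(Z,Y)$ relative to $\emptyset$: that arc is blocked by the district-criterion conditioning set $(S\cup\{X,Y\})\setminus\{Z,Y\}$, which contains $X$, but not by $S\setminus\{Z,Y\}$. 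To be fair, the paper's own passage from \eqref{eqs:completeness} to \eqref{eqs:completeness-2} rests on the same assertion (``$Z_1$ and $Z_2$ are not descendants of $X$ or $Y$,'' which fails for $Z_2=Y$), so you have reproduced the published argument faithfully; but the gap you flagged is genuine, and it is not repaired by combining the ancestor structure of arcs with the adjustment-set property. Some fix --- e.g.\ defining primality of $C$ for $(Z_1,Z_2)$ relative to $(\tilde{S}\cup\{X,Y\})\setminus\{Z_1,Z_2\}$, or an additional argument ruling out arcs through $X$ for the specific pairs produced --- appears to be needed.
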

Before we prove \Cref{thm:completeness}, note that by
\Cref{cor:preserve}, \cref{eqs:completeness} is equivalent
to
\[
  Z_1 \confarc Z_2 \mid \tilde{S} \cup \{X,Y\} \setminus \{Z_1,Z_2\}
  \ingraph{\gG} \quad \text{and} \quad \textnot Z_1 \confarc Z_2 \mid S
  \cup \{X,Y\} \setminus \{Z_1,Z_2\} \ingraph{\gG}.
\]
Because $Z_1, Z_2 \in \tilde{S} \cup \{X,Y\}$ and $\tilde{S}$ is an
adjustment set, we know $Z_1$ and $Z_2$ are not descendants of $X$ or
$Y$. Therefore, we can safely remove $\{X,Y\}$ from the conditioning
sets and \eqref{eqs:completeness} is also equivalent to
\begin{equation} \label{eqs:completeness-2}
 Z_1 \confarc Z_2 \mid \tilde{S} \setminus \{Z_1,Z_2\}
  \ingraph{\gG} \quad \text{and} \quad \textnot Z_1 \confarc Z_2 \mid S \setminus
  \{Z_1,Z_2\} \ingraph{\gG}.
\end{equation}
To interpret the conclusions of this theorem, imagine trying to find a
minimal sufficient
adjustment set $S$ by iteratively adding vertices. Suppose $\tilde{S}
\subseteq S$ is our current estimate. As long as $\tilde{S}$ is not
yet sufficient, we can find $Z_1, Z_2 \in \tilde{S} \cup \{X,Y\}$ such
that there is a confounding arc between them that is not blocked by
$\tilde{S}$. To block it, we can try to find a primary adjustment set
$C$ for $Z_1, Z_2$ relative to $\tilde{S}$ and add $C$ to $\tilde{S}$ and
make $\tilde{S} \cup C$ our next estimate. This theorem states that
for at least one choice of $C$, the new estimate $\tilde{S} \cup C$
will remain a subset of (and possibly the same as) $S$. If
$\tilde{S} \cup C$ is still not sufficient, we may iterate the process
until we eventually obtain the set $S$. In other words, \Cref{thm:completeness}
essentially shows that all minimal sufficient adjustment
sets for $(X,Y)$ can be found by recursively expanding a bidirected
edge with its minimal primary adjustment sets; this is the basis of
\Cref{alg:confound-select} below.

\begin{proof}[Proof of \Cref{thm:completeness}]
  Because $\tilde{S}$ is a proper subset of
  a minimal sufficient adjustment set $S$, $\tilde{S}$ is an
  adjustment set but is not sufficient. By \cref{cor:preserve}, we have
  \[ X \confpath Y \mid \tilde{S} \ingraph{\gG} \quad \implies \quad X
    \samedist Y \ingraph{\gG(\tilde{S} \cup \{X,Y\})}. \]
  Therefore, there exists a bidirected path
  \[ X=D_0 \bdedge \dots \bdedge D_{k} = Y \ingraph{\gG(\tilde{S} \cup
      \{X,Y\})}, \quad k \geq 1. \]
  We claim that there exists $j \in \{0,\dots,k-1\}$ such that $D_j
  \centernot{\bdedge} D_{j+1}$ in $\gG(S \cup
  \{X,Y\})$. Otherwise, the same bidirected path also appears in
  $\gG(S \cup \{X,Y\})$ (because $\tilde{S} \subset S$), so by
  \cref{cor:preserve},
\[ X \samedist Y \ingraph{\gG(S \cup \{X,Y\})} \quad \implies
  \quad X \confpath Y \mid S \ingraph{\gG}, \]
contradicting the assumption that $S$ is sufficient. By our choice, we have
$D_j, D_{j+1} \in \tilde{S} \cup \{X,Y\}$ and $D_j \bdedge
D_{j+1}$ in $\gG(\tilde{S} \cup \{X,Y\})$. Hence, we have shown the
existence of $(D_j, D_{j+1})$, rewritten as $(Z_1, Z_2)$, as desired.

For the second conclusion, observe that \cref{eqs:completeness-2} holds, 
that is, $S \setminus \{Z_1,Z_2\}$ is
primary for $(Z_1, Z_2)$ but
$\tilde{S} \setminus \{Z_1,Z_2\}$ is not. The existence of such
a minimal primary adjustment set $C$ then follows from the definition.
\end{proof}

\subsection{Iterative graph expansion algorithm}
\label{sec:graph-expans-algor}

We now introduce our procedure \textsc{ConfounderSelect} that
is based on two sub-routines: \textsc{SelectEdge} and \textsc{FindPrimary}. See
\Cref{alg:confound-select} for the pseudo-code of an implementation
using a priority queue $\mathcal{Q}$ that admits a \textsc{Pop} and a \textsc{Push}
operation; see also \cref{sec:pseudo-code-recurs} for a recursive
version of the algorithm that is shorter but less flexible. We leave
the choices of the priority index and of the \textsc{SelectEdge}
subroutine to the next subsection, as they do not affect the validity of
the algorithm. In practice, it is through the \textsc{FindPrimary}
subroutine that the information about the underlying graph is elicited
from the user; we discuss its implementation in \cref{sec:find-primary}.

In this algorithm, all the possible bidirected edges $\bar{S} \times \bar{S}$ in the
current graph $\gG(\bar{S})$, where $\bar{S}:=S \cup \{X,Y\}$ and $S$ is our working
adjustment set, are partitioned into three groups:
\begin{enumerate}
\item the set $\mathcal{B}_n$ that contains all the bidirected edges
  that are absent or already blocked (that is, $(A,B) \in
  \mathcal{B}_n$ means that we are already certain that $\textnot A
  \bdedge B \mid \bar{S}$ in the current graph and thus in all its
  expansions by \Cref{lem:monotone});
\item the set $\mathcal{B}_y$ that contains all the bidirected edges that
  exist or assumed to exist (i.e., confounding that cannot be controlled for, which we draw as $\color{blue} \barrow$), which the algorithm will not attempt to
  block; and
\item the set $\mathcal{B}_u := (\bar{S} \times \bar{S}) \setminus
  (\mathcal{B}_n \cup \mathcal{B}_y)$ that contains all the uncertain
  bidirected edges (drawn as $\dbarrow$)that the algorithm will attempt to block.
\end{enumerate}
In other words, the algorithm maintains $\mathcal{B}_u \cup
\mathcal{B}_y$ as a superset of the bidirected edges in
$\gG(\bar{S})$. In each iteration, the subroutine $\Call{SelectEdge}{X,Y,S,\mathcal{B}_y,\mathcal{B}_n}$ selects an
uncertain bidirected edge $\pi$ from $\mathcal{B}_u$.
The subroutine $\textsc{FindPrimary}$ is then called to find the
primary adjustment sets for the two end points of $\pi$ relative to the current
adjustment set.
If $\pi$ is already blocked by the current adjustment set (i.e., the empty set
is returned as a primary adjustment set), then $\pi$ is moved to $\mathcal{B}_n$.
Otherwise, the algorithm
attempts to move $\pi$ to $\mathcal{B}_n$ by expanding $\pi$ with
every primary adjustment set that is returned by the subroutine
$\textsc{FindPrimary}$. Finally, it attempts to make no expansion by moving
$\pi$ to $\mathcal{B}_y$. Because the vertex set $\sV$ is assumed
to be finite, this algorithm will eventually terminate.

\begin{theorem}[Soundness and completeness of iterative graph expansion] \label{thm:alg}
Let $\gG$ be an ADMG over vertex set $\sV$. The following two statements hold
for any distinct $X,Y \in \sV$.
  \begin{enumerate}
  \item  Suppose for any $A,B \in \sV$ and $S' \subseteq \sV \setminus \{A,B\}$,
  when $\Call{FindPrimary}{A,B; S'} \neq \emptyset$,
  every $C \in \Call{FindPrimary}{A,B; S'}$ is
    a primary adjustment set for $A$ and $B$ relative to $S'$ in $\gG$. Then
    every set in the output of $\Call{ConfounderSelect}{X,Y}$ is a
    sufficient adjustment set for $X$ and $Y$.
  \item Suppose further that $\Call{FindPrimary}{A,B; S'}$ contains every minimal primary adjustment set for $A$ and $B$ relative to $S'$ in $\gG$. Then the output of $\Call{ConfounderSelect}{X,Y}$ contains all the minimal sufficient adjustment sets for $X$ and $Y$.
  \end{enumerate}
\end{theorem}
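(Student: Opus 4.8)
The plan is to run the whole analysis through a small set of loop invariants maintained by \textsc{ConfounderSelect} and then read both halves of the theorem off from what those invariants say when a branch outputs. For a state of the queue with working adjustment set $S$, write $\bar S := S \cup \{X,Y\}$ and regard $\mathcal{B}_y,\mathcal{B}_n,\mathcal{B}_u$ as sets of unordered pairs of distinct vertices of $\bar S$ partitioning $\bar S \times \bar S$. The three running invariants I would carry for soundness are: (I1) $S$ is an adjustment set for $X,Y$; (I2) every pair $(A,B)\in\mathcal{B}_n$ is absent in $\gG(\bar S)$, equivalently $\textnot A \confarc B \mid \bar S\setminus\{A,B\} \ingraph{\gG}$ by \cref{cor:preserve}; and (I3) every bidirected edge of $\gG(\bar S)$ lies in $\mathcal{B}_y\cup\mathcal{B}_u$ (so $\mathcal{B}_y\cup\mathcal{B}_u$ over-approximates the true confounding structure of the current marginal graph).

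For Part 1 I would prove (I1)--(I3) by induction on the order states are pushed onto $\mathcal{Q}$. The initial state $(\emptyset,\emptyset,\emptyset)$ with $\mathcal{B}_u=\{(X,Y)\}$ satisfies all three trivially. For the induction step one checks the three kinds of successor of a popped state: moving an edge discovered to be blocked into $\mathcal{B}_n$ (uses \cref{cor:preserve}, plus \cref{lem:monotone} if the set passed to \textsc{FindPrimary} is smaller than $\bar S\setminus\{A,B\}$); expanding the selected edge $\pi=(A,B)$ by a primary adjustment set $C$ --- here (I2) survives because enlarging the conditioning set cannot unblock an already-blocked arc (\cref{lem:monotone}), and (I3) survives because marginalizing $\gG(\bar S\cup C)$ further down to $\bar S$ never deletes a bidirected edge, so any bidirected edge of $\gG(\bar S\cup C)$ with both endpoints old was already present in $\gG(\bar S)$ and thus in $\mathcal{B}_y\cup\mathcal{B}_u$, while the new edges touching $C$ go into the new $\mathcal{B}_u$; and moving $\pi$ to $\mathcal{B}_y$, which leaves $\mathcal{B}_y\cup\mathcal{B}_u$ unchanged. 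Invariant (I1) rests on the facts that the queried endpoints are always non-descendants of $X$ and $Y$ and that \textsc{FindPrimary} never introduces a descendant of $X$ or $Y$ (\cref{sec:find-primary}). Finiteness of $\sV$ gives termination. When a popped state triggers output --- i.e.\ $X$ and $Y$ are not joined by a bidirected path within $\mathcal{B}_y\cup\mathcal{B}_u$ --- invariant (I3) forces them into different districts of $\gG(\bar S)$, hence $\textnot X \confpath Y \mid S \ingraph{\gG}$ by \cref{cor:preserve}, and with (I1) this makes $S$ sufficient.

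For Part 2, fix any minimal sufficient adjustment set $S^*$; the plan is to exhibit one branch terminating in an output of $S^*$. Along it I would additionally maintain (J1) $S\subseteq S^*$ and (J2) every pair in $\mathcal{B}_y$ is a genuine bidirected edge of $\gG(\bar S^*)$. The engine is \cref{thm:completeness}: whenever $S\subsetneq S^*$ the set $S$ is not sufficient, so $X \samedist Y \ingraph{\gG(\bar S)}$, and \cref{thm:completeness} (with $\tilde S=S$) yields a pair $(Z_1,Z_2)$ with $Z_1 \bdedge Z_2 \ingraph{\gG(\bar S)}$ but $\textnot Z_1 \bdedge Z_2 \ingraph{\gG(\bar S^*)}$, lying on a bidirected $X$--$Y$ path of $\gG(\bar S)$, together with a non-empty minimal primary adjustment set $C\subseteq S^*\setminus S$ for $(Z_1,Z_2)$. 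By (I3) and (J2), $(Z_1,Z_2)\in\mathcal{B}_u$; hence $\mathcal{B}_u\neq\emptyset$ and the whole bidirected path survives in $\mathcal{B}_y\cup\mathcal{B}_u$, so the output test cannot fire while $S\subsetneq S^*$. Now, whatever edge $\pi=(A,B)$ the deterministic \textsc{SelectEdge} returns, I follow the successor preserving (J1)--(J2): if $A\bdedge B\ingraph{\gG(\bar S^*)}$, move $\pi$ to $\mathcal{B}_y$ (legitimate by (J2); the edge is present in $\gG(\bar S)$ so \textsc{FindPrimary} does not force the $\mathcal{B}_n$ move); if $\textnot A\bdedge B\ingraph{\gG(\bar S^*)}$ and $\pi$ is already blocked in $\gG(\bar S)$, move it to $\mathcal{B}_n$; otherwise $\textnot A\bdedge B\ingraph{\gG(\bar S^*)}$ while $A\confarc B$ currently, which forces the extra vertices of $\bar S^*$ to block the arc, so some minimal primary set returned by \textsc{FindPrimary} lies inside $S^*\setminus S$ --- expand by it, keeping (J1). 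Each move either deletes a pair from $\mathcal{B}_u$ with $\bar S$ fixed or strictly enlarges $\bar S$ within $\bar S^*$; since $\bar S$ is bounded and, while $S\subsetneq S^*$, a run of the first kind cannot drain $\mathcal{B}_u$ (it always contains a $(Z_1,Z_2)$-type edge, which is also of the third, expanding, type), the branch must reach $S=S^*$. Once $S=S^*$, only the first two kinds of move are available, so $\mathcal{B}_u$ drains to $\emptyset$; then (I3) and (J2) pin $\mathcal{B}_y$ to be exactly the bidirected edge set of $\gG(\bar S^*)$, in which $X$ and $Y$ are in distinct districts because $S^*$ is sufficient, so the output test fires and $S^*$ is returned.

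The main obstacle is assembling invariant (I3) and checking it survives an expansion step, since that is where the behaviour of marginalization (``further marginalization only adds bidirected edges'', together with \cref{cor:preserve}) is doing the real work; for completeness the delicate point is recognizing that \cref{thm:completeness} is exactly the tool that keeps $\mathcal{B}_u$ non-empty while $S\subsetneq S^*$, which is what forbids the constructed branch from halting at a strict subset of $S^*$. The remaining fiddly bookkeeping --- precisely which set is passed to \textsc{FindPrimary}, and why adjustment sets stay free of descendants of $X$ and $Y$ --- is routine given \cref{sec:find-primary} but should be stated with care.
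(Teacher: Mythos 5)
Your overall strategy is the same as the paper's: soundness via the district criterion applied to the invariant that $\mathcal{B}_y \cup \mathcal{B}_u$ over-approximates the bidirected edges of $\gG(\bar S)$, and completeness via \cref{thm:completeness}. Part 1 of your argument is correct and usefully more explicit than the paper's one-line justification; the induction on (I1)--(I3) using \cref{lem:monotone} and \cref{cor:preserve} goes through (modulo the descendant issue for (I1), which you rightly flag as needing care since the theorem's hypotheses only force the returned sets to avoid descendants of $A$ and $B$, not of $X$ and $Y$).

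The completeness half has a genuine gap, located in your cases (b) and (c). Moving $\pi=(A,B)$ into $\mathcal{B}_n$ requires $\emptyset \in \Call{FindPrimary}{A,B;S\setminus\{A,B\}}$, i.e.\ $\textnot A \confarc B \mid S \setminus \{A,B\}$, whereas ``absent in $\gG(\bar S)$'' only means $\textnot A \confarc B \mid \bar S \setminus \{A,B\}$. These differ precisely when $B=Y$ and the only blocker is $X$: a non-endpoint of a confounding arc must have an endpoint among its descendants, and $Y$ is the one member of $\bar S$ that is a descendant of $X$. In that situation no successor preserves (J1)--(J2): the edge cannot be certified absent; it is not an edge of $\gG(\bar S^*)$, so parking it in $\mathcal{B}_y$ breaks (J2) and eventually reconnects $X$ and $Y$ through $\mathcal{B}_y$; and since $X$ is never in the conditioning set passed to \textsc{FindPrimary}, no primary set inside $S^*\setminus S$ need exist, so case (c)'s claim fails. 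Concretely, take the DAG with edges $V \rdedge A$, $V \rdedge X$, $A \rdedge X$, $A \rdedge Y$, $X \rdedge Y$: the unique minimal sufficient set is $\{A\}$, yet after expanding $X \dbarrow Y$ by $\{A\}$ the pair $(A,Y)$ carries the arc $A \ldedge V \rdedge X \rdedge Y$, which is unblocked given $S\setminus\{A,Y\}=\emptyset$ and whose only admissible blocker is $V \notin \{A\}$; every branch therefore either grows past $\{A\}$ or is discarded, and $\{A\}$ is never output. To be fair, this is inherited from the paper: the claimed equivalence of \eqref{eqs:completeness} and \eqref{eqs:completeness-2} invokes ``$Z_1,Z_2$ are not descendants of $X$ or $Y$,'' which is false when $Z_2=Y$ and $X \rdpath Y$, and the conditioning set $S\setminus\{A,B\}$ in \cref{alg:confound-select} is inconsistent with the worked example in \cref{sec:an-example} (where $\Call{FindPrimary}{F,Y;\emptyset}=\{\{E\}\}$ is asserted despite the unblocked arc $F \ldedge O \rdedge T \rdedge X \rdedge I \rdedge Y$, which is only legitimate if $X$ is implicitly conditioned on). A correct proof must either work with the larger conditioning set $\bar S\setminus\{A,B\}$ throughout or argue separately that arcs blocked only by $X$ cannot arise; your case split silently assumes this problem away.
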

\begin{proof}
  Statement 1 says that the graph expansion algorithm is sound. It follows
  from the district criterion (\Cref{cor:preserve}), as an adjustment
  set $S$ is only added to the output $\mathcal{R}$ in lines 9-11 of
  \Cref{alg:confound-select} when $X$ and $Y$ are not connected by
  bidirected edges in $\mathcal{B}_y \cup \mathcal{B}_u$ (and hence in
  $\gG(\bar{S})$). Statement 2
  says that the graph expansion algorithm is complete for identifying
  minimal sufficient adjustment sets. It directly follows from
  \Cref{thm:completeness}.
\end{proof}

\begin{algorithm}[t]
  \caption{Confounder selection via iterative graph expansion}
  \label{alg:confound-select}
  \begin{algorithmic}[1] \Procedure{ConfounderSelect}{$X$, $Y$}
      \State $\mathcal{R} = \{ \}$ \Comment{Set of sufficient
        adjustment sets}
      \State $\mathcal{Q} = \Call{PriorityQueue}{(\emptyset, \emptyset,\emptyset)}$
      \Comment{Initial graph has a possible edge $X \dbarrow Y$}
      \While{$\mathcal{Q} \neq \emptyset$}
        \State $(S,\mathcal{B}_y,\mathcal{B}_n) = $
        \Call{Pop}{$\mathcal{Q}$}  \Comment{\parbox[t]{.34\linewidth}{$S$: current adjustment set\\ $\mathcal{B}_y$: edges not to be expanded \\$\mathcal{B}_n$: absent or blocked edges}}
        \State $\bar{S} = S \cup \{X,Y\}$
        \If{$X \samedist Y$ by edges in $\mathcal{B}_y$} \Comment{Fails the district criterion}
            \State \textbf{continue}
        \ElsIf{$\textnot X \samedist Y$ by edges in $(\bar{S}
            \times \bar{S}) \setminus \mathcal{B}_n$} \Comment{Satisfies the district criterion}
          \State $\mathcal{R} = \mathcal{R} \cup \{S\}$
          \State \textbf{continue}
        \EndIf
        \State $(A,B) = \pi = $ \Call{SelectEdge}{$X$, $Y$, $S$, $\mathcal{B}_y$, $\mathcal{B}_n$} \Comment{$\pi$ is selected from $(\bar{S} \times \bar{S}) \setminus (\mathcal{B}_y \cup \mathcal{B}_n)$}
        \State $\mathcal{L} = $ \Call{FindPrimary}{$(A,B); S \setminus
          \{A,B\}$}
        \If{$\emptyset \in \mathcal{L}$}
                \State \Call{Push}{$\mathcal{Q}$, ($S$, $\mathcal{B}_y$,
          $\mathcal{B}_{n} \cup \{\pi\}$)} \Comment{$\pi$ need not be expanded}
        \Else
        \For{$C \in \mathcal{L}$}
          \State \Call{Push}{$\mathcal{Q}$, ($S \cup C$, $\mathcal{B}_y$,
            $\mathcal{B}_{n} \cup \{\pi\}$)} \Comment{Expand $\pi$ by each primary adjustment set}
        \EndFor
        \State \Call{Push}{$\mathcal{Q}$, ($S$, $\mathcal{B}_y \cup \{\pi\}$,
          $\mathcal{B}_{n}$)} \Comment{Not to expand $\pi$}
       \EndIf
      \EndWhile
      \State \Return $\mathcal{R}$
    \EndProcedure
    \end{algorithmic}
  \end{algorithm}

\subsection{Practical considerations} \label{sec:choice-graph-expans}
\Cref{alg:confound-select} provides a sound and complete template
for confounder selection, which relies on specifying subroutines \textsc{FindPrimary} and \textsc{SelectEdge}
as well as a priority index specified for the queue $\mathcal{Q}$.
 In practice, instead of finding all the
(minimal) sufficient adjustment sets, often the goal is to find \emph{one} such
set \emph{quickly}, i.e., with only a few attempts of graph expansion. Further, when there are multiple edges to choose from, we can pick one judiciously to improve the efficiency for estimating the causal effect with adjustment. In what follows, we discuss how to design the priority index and \textsc{SelectEdge} to optimize these aforementioned \emph{secondary objectives} of confounder selection. We leave the
implementation of \textsc{FindPrimary} to the next subsection.

To find a (minimal) sufficient adjustment set quickly, we recommend the following \emph{min-cut} strategy. Suppose the priority queue $\mathcal{Q}$ is implemented such that (1) \textsc{Pop}($\mathcal{Q}$) returns an element with the lowest index, and (2) in case of a tie, returns the element that is last pushed to $\mathcal{Q}$. We choose the priority index to be
\begin{multline*}
\text{min-cut}(S,\mathcal{B}_y,\mathcal{B}_n) := \text{minimal number of edges removed from }\\ (\bar{S} \times \bar{S}) \setminus (\mathcal{B}_y \cup \mathcal{B}_n) \text{ to disconnect $X$ and $Y$}.
\end{multline*}
In case of $X \samedist Y$ by edges in $\mathcal{B}_y$, let $
\text{min-cut}(S,\mathcal{B}_y,\mathcal{B}_n) := \infty$. Because
$\text{min-cut}=0$ whenever the district criterion is satisfied (line
9, \cref{alg:confound-select}), this choice prioritizes those
candidates needing the fewest number of expansions. Accordingly, we
recommend a subroutine \textsc{SelectEdge}($X$, $Y$, $S$, $\mathcal{B}_y$,
$\mathcal{B}_n$) that returns an edge that lies \emph{on} the min-cut.
When there is more than one edge that lies on the min-cut, we further recommend
choosing one that is closest to $Y$ to improve the efficiency of the corresponding adjustment
estimator for the effect of $X$ on $Y$, with the intuition being that a covariate
close to $Y$ can help predict $Y$ to reduce variance; see
\citet{henckel2022} and \citet{rotnitzky2020efficient} for further
discussion.
This strategy is
adopted by the examples in \Cref{sec:an-example} and \Cref{sec:unobserved},
where the edge chosen from the popped graph is marked in red.

\subsection{Finding primary adjustment sets} \label{sec:find-primary}

\begin{algorithm}[t]
  \caption{Finding primary adjustment sets for $A$ and $B$ relative to $S$}
  \label{alg:primary}
  \begin{algorithmic}[1] \Procedure{FindPrimary}{$A$, $B$; $S$, \texttt{minimal\_only=False}}
      \State $\mathcal{L} = \{\}$ \Comment{Set of primary adjustment sets relative to $S$}
      \State $\mathcal{T} = \Call{PriorityQueue}{S}$
      \Comment{Initialize with the current adjustment set}
      \While{$\mathcal{T} \neq \emptyset$}
        \State $T = \Call{Pop}{\mathcal{T}}$
                \State $C = \Call{CommonCause}{A,B; T}$ \Comment{Find
                  a common cause that is not blocked by $T$}
                \If {$C ~\text{is}~ \textsc{null}$} \Comment{$T$ is primary}
                        \State $L = T \setminus S$
                        \If {\texttt{minimal\_only}} \Comment{Remove redundant vertices}
                                \For {$Z \in L$}
                                        \If {$\Call{CommonCause}{A,B;
                                              S \cup L \setminus \{Z\}}$ is \textsc{null}}
                                                \State $L = L \setminus \{Z\}$
                                        \EndIf
                                \EndFor
                        \EndIf
                        \State $\mathcal{L} = \mathcal{L} \cup \{L\}$
                      \Else
                        \If {$\Call{IsObserved}{C}$}
                                \State $\Call{Push}{\mathcal{T}, T \cup \{C\}}$
                        \Comment{Add the common cause to $T$}
                        \Else
                        \For{$M_i \in \Call{FindMediator}{A,B;C,S}$}
                            \State $\Call{Push}{\mathcal{T}, T \cup
                              M_i}$
                            \Comment{Add mediators to $T$}
                        \EndFor
                        \EndIf
                \EndIf
      \EndWhile
      \State \Return $\mathcal{L}$
    \EndProcedure
    \end{algorithmic}
\end{algorithm}

Given two vertices $A,B$ and a set $S$ such that $A,B
\notin S$, recall that an adjustment set $S'$ for $A,B$ is primary relative to $S$
if $\textnot A \confarc B \mid S \cup S'$. Suppose $\bar{\gG}$ is the underlying causal
ADMG over vertex set $\bar{\sV}$. By \cref{thm:preserve}, $\textnot A \confarc B \mid S \cup S'$
holds if and only if the set of common causes of $A$ and $B$ not blocked by $S \cup S'$ is empty, that is
\[ \mathcal{C}(A,B; S \cup S')  := \{C \in \bar{\sV}: C \rdpath A \mid
  S, S' \text{ and } C \rdpath B \mid S, S' \ingraph{\bar{\gG}} \} = \emptyset.\]

Consider \cref{alg:primary} for an iterative procedure that identifies
primary adjustment sets. It uses three subroutines: \textsc{CommonCause},
\textsc{FindMediator}, and \textsc{IsObserved}.
If $\mathcal{C}(A,B;S) \neq \emptyset$, the subroutine
$\Call{CommonCause}{A,B; S}$ returns a vertex from
$\mathcal{C}(A,B;S)$, i.e., a common cause of $A$ and $B$ whose causal
paths to $A$ and $B$ are not blocked by $S$; otherwise,
$\Call{CommonCause}{A,B; S}$ returns \textsc{null}.
For $C \in \mathcal{C}(A,B;S)$, the subroutine
$\Call{FindMediator}{A,B;C,S}$ then returns a collection of
\emph{observed} adjustment sets $\{M_i \subseteq \sV \}$ of $A,B$ such
that \[\textnot C \rdpath A \mid S, M_i \ingraph{\bar{\gG}} \quad \text{or} \quad \textnot C \rdpath B
  \mid S, M_i \ingraph{\bar{\gG}}.\]
These subroutines can be formulated in terms of questions in plain language, which we discuss in \cref{sec:discuss}.

The procedure \textsc{FindPrimary} calls \textsc{CommonCause}
iteratively in a similar way that \textsc{ConfounderSelect} calls
\textsc{FindPrimary}. If \textsc{CommonCause} is \textsc{null}, then the
current adjustment set is primary and one can further trim this set to
obtain a minimal primary adjustment set when the option
\texttt{minimal\_only} is set to \texttt{True} (see line 9-17 in
\Cref{alg:primary}). Otherwise, we need to
block the confounding arc $A
\leftsquigarrow C \rdpath B$ by adding
either $C$ or mediators on the $C \rdpath A$ or $C
\rdpath B$ paths (obtained by \textsc{FindMediator}) to
$S$, if these variables can be observed.
When implementing this algorithm, it may be desirable to use a
priority queue to store candidate sets to prioritize smaller
adjustment sets.

\section{Example}
\label{sec:an-example}

\begin{figure}[t]
\centering
\begin{tikzpicture}
\tikzset{rv/.style={circle,inner sep=1pt,draw,font=\sffamily},
lv/.style={circle,inner sep=1pt,fill=gray!50,draw,font=\sffamily},
fv/.style={rectangle,inner sep=1.5pt,fill=gray!20,draw,font=\sffamily},
node distance=12mm, >=stealth}
\node[rv] (X) {$X$};
\node[right of=X] (l0) {};
\node[rv, right of=l0] (I) {$I$};
\node[right of=I] (l1) {};
\node[rv, right of=l1] (Y) {$Y$};
\node[rv, above of=X, yshift=5mm] (T) {$T$};
\node[rv, above of=I] (C) {$C$};
\node[rv, above of=l0] (P) {$P$};
\node[rv, above of=l1] (N) {$N$};
\node[rv, above of=P] (G) {$G$};
\node[rv, above of=G] (O) {$O$};
\node[rv, above of=C, yshift=5mm] (F) {$F$};
\node[rv, above of=N, yshift=12mm] (E) {$E$};
\node[rv, above of=Y] (W) {$W$};
\node[rv, above of=W] (D) {$D$};
\draw[->, thick] (X) -- (I);
\draw[->, thick] (I) -- (Y);
\draw[->, thick] (T) -- (X);
\draw[->, thick] (T) -- (P);
\draw[->, thick] (C) -- (P);
\draw[->, thick] (C) -- (I);
\draw[->, thick] (N) -- (I);
\draw[->, thick] (N) -- (Y);
\draw[->, thick] (O) -- (T);
\draw[->, thick] (O) -- (F);
\draw[->, thick] (E) -- (F);
\draw[->, thick] (E) -- (D);
\draw[->, thick] (E) -- (N);
\draw[->, thick] (D) -- (W);
\draw[->, thick] (D) -- (N);
\draw[->, thick] (W) -- (Y);
\draw[->, thick] (F) -- (N);
\draw[->, thick] (F) -- (G);
\draw[->, thick] (G) -- (X);
\end{tikzpicture}
\caption{A causal DAG in \citet{shrier2008reducing}.}
\label{fig:warmup}
\end{figure}

We use the DAG in \Cref{fig:warmup} taken from
\citet{shrier2008reducing} to illustrate the iterative graph expansion
algorithm. In this example, the treatment $X$ is warm-up prior to
exercise and the outcome $Y$ is injury, and this DAG was used by
\citet{shrier2008reducing} to select confounders using the back-door
criterion. In \cref{sec:unobserved}, we provide another example that involves unobserved variables.

When applying the graph expansion procedure, each iteration is indexed
by a tuple $(S,
\mathcal{B}_y, \mathcal{B}_n)$ associated with a graph that consists
of bidirected edges on vertices $\bar{S} = S \cup \{X,Y\}$. Hereafter, we use
$\color{blue} \barrow$ to indicate bidirected edges in $\mathcal{B}_y$
that are not to be expanded, and $\dbarrow$ to indicate bidirected
edges in $(\bar{S} \times \bar{S}) \setminus (\mathcal{B}_y \cup
\mathcal{B}_n)$ that can be further expanded. In addition, we
highlight the edge $\pi$ chosen by \textsc{SelectEdge} with
$\color{red} \dbarrow$ and indicate the
graph popped from the queue with a box. The superscript of each box
indicates the min-cut of the popped graph and in each iteration a graph with the smallest min-cut is popped from the queue; see
\Cref{sec:choice-graph-expans}. For this example, we assume
the subroutine \textsc{FindPrimary} returns all minimal primary
adjustment sets (i.e., option \texttt{minimal\_only=True}), so by \cref{thm:alg} the procedure is guaranteed to find all the
minimal sufficient adjustment sets.
\begin{enumerate}
\item We start with
\[  \mathcal{Q} = \Biggl[\; \boxed{
\;\Biggr], \quad \mathcal{R}=\left\{\{E,F\}, {\color{blue} \{T,F\}}\right\}.
\end{multline*}
Besides $\{E,F\}$ and $\{T,F\}$, the algorithm will proceed to find all the other minimal sufficient adjustment sets: $\{O,G\}$, $\{O,F\}$, $\{D,N\}$, $\{N,W\}$ and $\{G,T\}$. For brevity, we omit the remaining steps, which follow analogously.
\end{enumerate}

\section{Discussion} \label{sec:discuss}
We have shown that confounder selection, reformulated as blocking
confounding paths $X \confpath Y$, can be reduced to iteratively
finding primary adjustment sets that block confounding arcs $A
\confarc B$, where $(A,B)$ is either $(X,Y)$ or any other pair of variables
introduced in this process. Finding primary adjustment sets can
be further reduced to finding common causes and finding mediators.
The graph expansion procedure presented as \cref{alg:confound-select} provides
a systematic approach to confounder selection. In practice, the user or domain expert
interacts with the procedure by providing information to the subroutine in \cref{alg:primary}.
At least in principle, the information can be elicited by asking the following questions for various choices of variables $A,B$ and set $S$.
\begin{enumerate}[(i)]
\item Is there a common cause $C$ of $A$ and $B$ such that neither its effect on $A$ nor its effect on $B$ is fully mediated by $S$?

\item If the answer is ``yes'', is $C$ observed?

\item If the answer to the first question is ``yes'' but to the second question is ``no'', then what are some of the observed variables that, when combined with those variables already in $S$, fully
mediate the causal effect of $C$ on $A$ or the causal effect of $C$ on
$B$?
\end{enumerate}
Clearly, answering these questions does not require one's familiarity
with the theory of causal graphical models beyond the basic concepts
of causes and mediators.

We have focused on finding the sufficient adjustment set for a single
treatment variable $X$ and a single outcome variable
$Y$. \citet{shpitser2010validity} and \citet{perkovic2015} provided
generalized adjustment criteria for multiple $X$ and $Y$ variables
when the graph is given. It would be useful to extend the
iterative graph expansion algorithm to this more general problem.

We conclude our paper with several remarks. First, it is evident that
this graph expansion procedure only requires partial knowledge about
the underlying graph. In particular, the procedure never inquires
about the directed edges between the variables in the working
graph. Second, by construction
our method guards against the so-called collider bias or M-bias
\citep{greenland1999causal}. More broadly, the district criterion can
be employed to identify such bias for any given adjustment
set. Finally, we would like to stress again that confounder selection
is crucial for both the design and the analysis of an
observational study. As pointed out by \citet{rubin2008objective},
``for objective causal
inference, design trumps analysis''. Building on the success of causal graphs to
represent structural assumptions, more work on
using causal graphs to aid the design of observational studies is much needed.

\bibliographystyle{imsart-nameyear}

\appendix
\crefalias{section}{appendix}
\clearpage{}

\section{Recursive version of the graph expansion algorithm}
\label{sec:pseudo-code-recurs}

The recursive version is shorter but less flexible; see \Cref{alg:confound-select-recursion} below.

\begin{algorithm}[h]
  \caption{Confounder selection via graph expansion (recursive version)}
  \label{alg:confound-select-recursion}
  \begin{algorithmic}[1] \Procedure{ConfounderSelect}{$X$, $Y$}
      \State $\mathcal{R} = \emptyset$
      \Procedure{GraphExpand}{$S$, $\mathcal{B}_y$, $\mathcal{B}_n$}
        \State $\bar{S} = S \cup X \cup Y$
        \If{$X \samedist Y$ by edges in
            $\mathcal{B}_y$}
          \State \Return
        \ElsIf{$\textnot X \samedist Y$ by edges in $(\bar{S}
            \times \bar{S}) \setminus \mathcal{B}_n$}
          \State $\mathcal{R} = \mathcal{R} \cup \{S\}$
          \State \Return
        \EndIf
        \State $(A,B) = \pi = $ \Call{SelectEdge}{$X$, $Y$, $S$, $\mathcal{B}_y$, $\mathcal{B}_n$}
        \State $\mathcal{L} = $ \Call{FindPrimary}{$(A,B); S \setminus \{A,B\}$}
\For{$C$ in \Call{FindPrimary}{$\pi$, $S \setminus \{A, B\}$}}
          \Comment{Primary adjustment sets for $\pi$ given $S$}
          \State \Call{GraphExpand}{$S \cup C$, $\mathcal{B}_y$,
            $\mathcal{B}_{n} \cup \{\pi\}$}
        \EndFor
        \State \Call{GraphExpand}{$S$, $\mathcal{B}_y \cup \{\pi\}$,
          $\mathcal{B}_{n}$}
\EndProcedure
      \State \Call{GraphExpand}{$\emptyset$, $\emptyset$,
        $\emptyset$}
      \Comment{Initial graph has a possible edge $X \bdedge Y$}
      \State \Return $\mathcal{R}$
    \EndProcedure
    \end{algorithmic}
  \end{algorithm}

\section{Proofs} \label{apx:proofs}
\subsection{Proof of \cref{thm:adjust}} \label{sec:proof-adjust}
\begin{proof}
Pearl's back-door criterion \eqref{eqs:back-door} can be stated as (i) $S \cap
\de(X) = \emptyset$ and (ii) $\sP[X \fullsquigfull \ast \fullsquighalf Y \mid_a S] = \emptyset$. Given $X \rdpath Y \ingraph{\gG}$,
it follows that $\text{(i)} \iff S \cap (\de(X) \cup \de(Y)) =
\emptyset$. Because $\sP[X \confpath Y \mid_a S] \subseteq \sP[X \fullsquigfull \ast \fullsquighalf Y \mid_a S]$, we have $\text{(ii)} \implies \textnot X \confpath Y \mid
S$ (see \cref{def:conf-relation}).
It remains to show that the reverse is also true. To prove by
contradiction, suppose we have
\[\sP[X \confpath Y \mid_a S] = \emptyset, \quad \sP[X \fullsquigfull \ast \fullsquighalf Y \mid_a S] \neq \emptyset,\]
which then implies $\sP[X \leftrightsquigarrow \ast \fullsquigno Y \mid_a S] \neq \emptyset$. By $X \rdpath Y$ and acyclicity of the graph, any such path must have one or more colliders. Let $C$ be the collider on the path that is closest to $Y$. Then, the m-connection implies $C \in \bar{\an}(S)$ and further $Y \in \an(S)$, contradicting (i).
\end{proof}

\subsection{Proof of \Cref{lem:walk-path}} \label{sec:proof-crefl-path}
\begin{proof}
  {\bf `$\Rightarrow$' direction:} By definition, $\sP[A \rdpath B \mid_a C]
  \subseteq \sW^s[A \rdpath B \mid C]$ and $\sP[A \confarc B \mid_a C]
  \subseteq \sW^s[A \confarc B \mid C]$ because there are no
  colliders. We are left to show the last implication, i.e.\ the
  existence of ancestrally unblocked confounding paths implies the
  existence of unblocked confounding simple walks.

  Let $\tau \in \sP[A \confpath B \mid_a C]$ and suppose, without loss
  of generality, $\tau$
  contains a sequence of colliders $\tilde{C}_1, \dots, \tilde{C}_k$ with
  $\tilde{C}_1$ closest to $A$ ($k \geq 1$). In other words, $\tau$
  looks like
  \[
    A \confarc \tilde{C}_1 \confarc \dotsb \confarc \tilde{C}_k
    \confarc B.
  \]
  We associate with each $\tilde{C}_i$ a vertex $C_i$. Because
  $\tau$ is m-connected, we have $\tilde{C}_i \in \overline{\an}(C)$ for $i = 1,\dots,k$.
  If $\tilde{C}_i \in C$, let $C_i = \tilde{C}_i$; otherwise, choose a vertex
  $C_i \in C$ that is among the closest to $\tilde{C}_i$ by directed
  edges and expand $\tau$ at $C_i$ as:
  \begin{equation}
    \label{eq:added-segment}
    \dotsb \confarc \tilde{C}_i \rightsquigarrow C_i \leftsquigarrow
    \tilde{C}_i \confarc \dotsb.
  \end{equation}
  By choosing the segments $\tilde{C}_i \rightsquigarrow C_i$ to be a
  shortest directed path from $\tilde{C}_i$ to $C_i$ (and $C_i
  \leftsquigarrow \tilde{C}_i$ to be its reverse), this introduces no
  new vertices in $C$ other than $C_i$. Repeating this expansion for
  every $i = 1,\dotsc,k$, we obtain a confounding walk, denoted as
  $\pi_0$, that looks like
  \[
    A \confarc C_1 \confarc \dotsb \confarc C_k
    \confarc B.
  \]
  This walk is not blocked given $C$, i.e.\ $\pi_0 \in \sW[A
  \confpath B \mid C]$, because all of its colliders are in
  $C$ and none of its non-colliders is in $C$.

  However, $\pi_0$ is
  not necessarily a simple walk because $A$ or $B$ may appear in an added
  segment in \eqref{eq:added-segment}. Suppose $A$ appears in the
  segment $\tilde{C}_i \rightsquigarrow C_i \leftsquigarrow
  \tilde{C}_i$; clearly $A \neq C_i$ because $C_i
  \in C$ but $A \not \in C$. Thus, the walk $\pi_0$ looks like
  \[
    A \confarc C_1 \confarc \dotsb \confarc \tilde{C}_i \rightsquigarrow A \rightsquigarrow C_i
    \leftsquigarrow A \leftsquigarrow \tilde{C}_i \confarc \dotsb
    \confarc C_k \confarc B.
  \]
  We may then ``restart'' the walk by considering the subwalk
  starting from the last $A$ in the last display. Similarly, if $B$
  appears in this segment and $\pi_0$ looks like
  \[
    A \confarc C_1 \confarc \dotsb \confarc \tilde{C}_i \rightsquigarrow B \rightsquigarrow C_i
    \leftsquigarrow B \leftsquigarrow \tilde{C}_i \confarc \dotsb
    \confarc C_k \confarc B,
  \]
  we may consider the sub-walk ending at the first $B$. By repeating
  this operation, we arrive at a simple, m-connected walk $\pi$ with
  both arrowheads at $A$ and $B$ preserved. In other words, $\pi \in
  \sW^s[A \confpath B \mid C]$.

  It is easy to see that, by the same expansion and deduplication
  operations, we can construct an unblocked simple walk $\pi \in
  \sW^s[A \mconn B \mid C]$ from an ancestrally unblocked path $\tau \in
  \sP[A \mconn B \mid_a C]$.

  \medskip {\bf `$\Leftarrow$' direction:} Consider $\pi \in \sW^s[A \rdpath B \mid
  C]$. Because $\gG$ is acyclic, $\pi$
  must contain no repeated vertices so $\pi \in \sP[A \rdpath B \mid_a
  C]$ ($\pi$ contains no collider). Consider $\pi \in \sW^s[A \confarc
  B \mid C]$, so $\pi$ looks like
  \[
    A \ldpath U_1 \rdpath B \quad \text{or} \quad A
    \ldpath U_1 \bdedge U_2 \rdpath B.
  \]
  If $\pi$
  contains repeated vertices, let $D$ be the repeated vertex closest
  to $A$ in $\pi$. Because $\gG$ is acyclic, $D$ must appear exactly
  once in the segment $A \ldpath U_1$ and exactly once in the
  segment $U_1
  \rdpath B$ or $U_2 \rightsquigarrow B$. In other words, $\pi$
  looks like
  \begin{align*}
    &A \ldpath D \ldpath U_1 \rdpath D \rdpath B,
    \quad A \ldpath D \bdedge U_2 \rdpath D
    \rdpath B, \\
    &A \ldpath D \ldpath U_1
    \bdedge D \rdpath B, \quad  \text{or} \quad A
    \ldpath D \ldpath U_1 \bdedge U_2 \rdpath D
    \rdpath B.
  \end{align*}
  In either case, we can obtain an unblocked walk $\tau$ that looks
  like $A \ldpath D \rdpath B$ by concatenating the first and last
  segments of the walk. Because $D$ is the closest repeated vertex to
  $A$ in $\pi$, this new walk has no repeated vertices, i.e.\ $\tau \in
  \sP[A \confarc B \mid C]$.

  Now consider $\pi \in \sW^s[A \confpath B \mid C]$. Suppose a
  vertex $D$ appears more than once in $\pi$. Because $\pi$ is simple,
  we know $D \not \in \{A, B\}$. Motivated by the above
  argument for confounding arcs, consider the following deduplication
  operation. For any simple walk $\lambda$ from $A$ to $B$ with a repeated
  vertex $D$, let $\lambda'$ be the concatenation of the sub-walk from
  $A$ to the first occurrence of $D$ and the subwalk from the last
  occurrence of $D$ to $B$. To visualize this, $\lambda$ can always be
  represented as
  \[
    A \halfsquigfull \ast \fullsquighalf D \mconn D \halfsquigfull \ast
    \fullsquighalf B.
  \]
  The deduplicated walk $\lambda'$ then looks like
  \[
    A \halfsquigfull \ast \fullsquighalf D \halfsquigfull \ast
    \fullsquighalf B.
  \]

  Suppose that $\lambda$ satisfies the following two properties:
  \begin{align*}
  \text{(P1) }& \text{all occurrences of the same vertex are of the
                same type, i.e., a collider or a non-collider},\\
  \text{(P2) }& \text{all non-colliders are not in $C$ and all
              colliders are contained in $\overline{\an}(C)$}.
  \end{align*}
  Then, we claim that the map $\lambda \mapsto \lambda'$ preserves these properties.
  To show this, consider the following two cases.
  \begin{enumerate}[(i)]
  \item $D$ is a collider on $\lambda$, so $\lambda$ looks like
  \[
    A \halfsquigfull \ast \fullsquigfull D \confpath D \fullsquigfull
    \ast
    \fullsquighalf B.
  \]
    It is obvious that $D$ is also a collider on
    $\lambda'$. Because $\lambda$ satisfies (P1), we have $D \in
    \overline{\an}(C)$. It is easy to see that $\lambda'$ satisfies
    (P1) and (P2).
  \item $D$ is not a collider on $\lambda$. If $D$ is also a
    non-collider on $\lambda'$, clearly (P1) and (P2) hold for
    $\lambda'$. If $D$ becomes a new collider on $\lambda'$, then
    $\lambda$ looks like
    \[
      A \halfsquigfull \ast \fullsquigfull D \nosquigfull \ast
      \fullsquigno D \fullsquigfull \ast \fullsquighalf B.
    \]
    Observe that on $\lambda$ there exists one or more colliders
    between the first and last occurrence of $D$, and $\lambda$
    contains a directed path from $D$ to a collider. This implies that
    $D \in \overline{\an}(C)$. Thus, (P1) and (P2) continue to hold for
    $\lambda'$.
  \end{enumerate}

  We now return to the simple, confounding walk $\pi \in \sW^s[A
  \confpath B \mid C]$. Let $\pi'$ denote the image
  of $\pi$ under the deduplicating operation above. Because $\pi$ is
  not blocked given $C$, it is obvious that $\pi$ satisfies
  (P1) and (P2). Thus, $\pi'$ also satisfies (P1) and (P2). By iteratively
  applying the deduplication, we eventually eliminate all repeated
  vertices on $\pi$ and arrive at a path $\tau$ from $A$ to $B$ that
  satisfies (P1) and (P2). It immediately follows from (P2) that $\tau$ is
  not blocked given $C$. Moreover, both endpoint arrowheads of $\pi$
  are preserved by the deduplication above. Thus, $\tau \in \sP[A
  \confpath B \mid_a C]$.

  It is easy to see that we can use the same deduplication process to
  construct an ancestrally unblocked path $\tau \in
  \sP[A \mconn B \mid_a C]$ from an unblocked simple walk $\pi \in
  \sW^s[A \mconn B \mid C]$. This completes our proof.
\end{proof}

\subsection{Proof of \Cref{lem:walk-preserve}}
\label{sec:proof-crefl-pres}
\begin{proof}
  We first consider the preservation from $\gG$ to $\tilde{\gG}$.
  Let $\pi \in \sW^s[A \mconn B \mid C \ingraph{\gG}]$. Its
  marginalization onto $\tilde{V}$ is a walk $\pi'$
  in $\tilde{\gG}$ defined as follows. Let $D_0 = A, D_1, \dots, D_k,
  D_{k+1} = B$ ($k \geq 0$) be all vertices on $\pi$ that are
  contained in $\tilde{V}$. By denoting $U = V \setminus \tilde{V}$, $\pi$ looks like
  \begin{equation}
    \label{eq:pi-before-proj}
    A = D_0 \overset{\text{via}\,U}{\mconnarc} D_1
    \overset{\text{via}\,U}{\mconnarc} \dotsb
    \overset{\text{via}\,U}{\mconnarc} D_k
    \overset{\text{via}\,U}{\mconnarc} D_{k+1} = B,
  \end{equation}
  where between every $(D_i, D_{i+1})$ there is an unblocked arc
  because $U \cap C = \emptyset$. The marginalization of each arc
  $D_i
  \overset{\text{via}\,U}{\mconnarc} D_{i+1}$ on $\tilde{V} = V \setminus U$
  is defined as
  \[
    D_i
    \begin{Bmatrix}
      \overset{\text{via}\,U}{\rdpath} \\
      \overset{\text{via}\,U}{\ldpath} \\
      \overset{\text{via}\,U}{\fullsquigfull} \\
    \end{Bmatrix}
    D_{i+1}
    \quad
    \mapsto
    \quad
    D_i
    \begin{Bmatrix}
      \rdedge \\
      \ldedge \\
      \bdedge \\
    \end{Bmatrix}
    D_{i+1},
  \]
  where the corresponding edge on the RHS must exist in $\tilde{\gG}$
  by the definition of marginal graph.
  Thus, the marginalization of $\pi$ is a simple walk $\pi'$ in
  $\tilde{\gG}$ that looks like
  \begin{equation}
    \label{eq:pi-after-proj}
    A = D_0 \halfstraighalf D_1 \halfstraighalf \dotsb \halfstraighalf
    D_k \halfstraighalf D_{k+1} = B.
  \end{equation}
  Thus, we conclude that $\pi' \in \sW^s[A \mconn B \mid C
  \ingraph{\,\tilde{\gG}}]$.
  Further, because the marginalization as defined above preserves endpoint
  arrowheads, $D_i$ is a collider on $\pi'$ if and only if $D_i$ is a
  collider on $\pi$. Consequently, we also have
  \[
  \pi \in \sW^s \left[ A \begin{Bmatrix} \rdpath \\ \confarc \\
      \confpath \end{Bmatrix} B \mid C \ingraph{\gG}  \right] \quad
  \implies \quad
  \pi' \in \sW^s \left[ A \begin{Bmatrix} \rdpath \\ \confarc \\
      \confpath \end{Bmatrix} B \mid C \ingraph{\,\tilde{\gG}}
  \right].
\]

  Now we prove the reverse direction.
  Given any $\pi' \in \sW^s[A \mconn B \mid C
  \ingraph{\,\tilde{\gG}}]$, we want to find a
  simple, m-connected walk $\pi$ in $\gG$. Walk $\pi'$ can be written as
  \eqref{eq:pi-after-proj} for some $A = D_0, D_1,\dots,D_k, D_{k+1}=B
  \in \tilde{V}$. By inverting the marginalization, we obtain a walk $\pi$ in
  $\gG$ that looks like \eqref{eq:pi-before-proj}. It is easy to see
  that $\pi$ is simple because $A,B \not \in U$ and $A, B \not
  \in \{D_1,\dots,D_k\}$ since the $\pi'$ is simple. Further, because
  each segment between $D_i$ and $D_{i+1}$ is an arc and contains no
  collider, $\pi$ and $\pi'$ have the same set of colliders. Therefore,
  we conclude that $\pi \in \sW^s[A \mconn B \mid C
  \ingraph{\gG}]$. Further, by
  the fact that $\pi$ and $\pi$ have the same endpoint arrowheads/tails and the
  same set of colliders, we also have
  \[ \pi' \in \sW^s \left[  A \begin{Bmatrix} \rdpath \\ \confarc \\
      \confpath \end{Bmatrix} B \mid C \ingraph{\,\tilde{\gG}}  \right] \quad
  \implies \quad
  \pi \in \sW^s \left[ A \begin{Bmatrix} \rdpath \\ \confarc \\
                           \confpath \end{Bmatrix} B \mid C
                         \ingraph{\gG}  \right].\]
  This completes our proof.
\end{proof}

\section{Graphoid-like properties} \label{sec:graphoid}

We consider extending the ternary relations in
\cref{def:conf-relation} to disjoint sets $A, B, C \subset \sV$. For
example, we write
\[ A \confpath B \mid C \ingraph{\gG} \quad \iff \quad a \confpath b
  \mid C \ingraph{\gG} \text{ for some $a \in A$, $b \in B$} \]
and
\[ \textnot A \confpath B \mid C \ingraph{\gG} \quad \iff \quad \textnot a
  \confpath b \mid C \ingraph{\gG} \text{ for every $a \in A$, $b \in B$}. \]
In addition, following the convention in the literature, we assume
$\textnot A \confpath \emptyset \mid C$ holds for every disjoint $A$ and
$C$. Extension for `$\confarc$' is made
similarly. To avoid clutter, we use comma in place of `$\cup$' in
these relations, e.g., $A \confpath B,C \mid D$ stands for $A
\confpath B \cup C \mid D$.

\begin{definition}[Graphoid] \label{def:graphoid}
Let $\mathcal{J}$ be a collection of tuples $\langle A, B \mid C \rangle$ for $A,B,C$ that are disjoint subsets of a ground set $\sV$. For disjoint $A,B,C,D \subset \sV$, consider the following properties:
\begin{enumerate}[(i)]
\item triviality: $\langle A, \emptyset \mid C \rangle$ for every disjoint $A,C \subset \sV$;
\item symmetry: $\langle A, B \mid C \rangle \implies \langle B, A \mid C \rangle$;
\item decomposition: $\langle A, B \cup C \mid D \rangle \implies \langle A, B \mid D \rangle \text{ and } \langle A, C \mid D \rangle$;
\item weak union: $\langle A, B \cup C \mid D \rangle \implies \langle A, B \mid C \cup D \rangle$;
\item contraction: $\langle A, C \mid D \rangle \text{ and } \langle A, B \mid C \cup D\rangle \implies \langle A, B \cup C \mid D \rangle$;
\item intersection: $\langle A, B \mid C \cup D \rangle \text{ and } \langle A, C \mid B \cup D \rangle \implies \langle A, B \cup C \mid D \rangle$;
\item composition: $\langle A, B \mid D \rangle \text{ and } \langle A, C \mid D \rangle \implies \langle A, B \cup C \mid D \rangle$.
\end{enumerate}
We say $\mathcal{J}$ is a \emph{semi-graphoid} over $\sV$, if it
satisfies (i)--(v); further, we say $\mathcal{J}$ is a \emph{graphoid}
over $\sV$, if it satisfies (i)--(vi), and finally, a
\emph{compositional graphoid} over $\sV$, if it satisfies (i)--(vii).
\end{definition}

For any ADMG, it is known that m-separation ($\textnot \cdot \mconn
\cdot \mid \cdot$) is a compositional graphoid
\citep{sadeghi2014markov}. However, separation relations in terms of 
`$\confarc$' and `$\confpath$' are weaker than semi-graphoids, 
as stated in the next result.

\begin{theorem}\label{thm:graphoid}
Relations $\mathcal{J}_{\confarc}:=\{\langle A, B \mid C\rangle:
\textnot A
\confarc B \mid C \ingraph{\gG} \}$ and $\mathcal{J}_{\confpath}:=\{
\langle A, B \mid C \rangle: \textnot A \confpath B \mid C \ingraph{\gG} \}$
satisfy properties (i)--(iv) and (vii) in \Cref{def:graphoid} for any
ADMG $\gG$. However, there exist ADMGs in which (v) and (vi) fail for
both relations.
\end{theorem}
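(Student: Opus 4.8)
The plan is to separate the five axioms that hold from the two that fail and to note that all of the effort goes into weak union. First I would record that the set‑valued relations have ``$\exists$ a connecting pair'' semantics: directly from the extension in \Cref{sec:graphoid},
\[
\textnot A \confpath (B\cup C) \mid D \quad\Longleftrightarrow\quad \textnot A \confpath B \mid D\ \text{ and }\ \textnot A \confpath C \mid D ,
\]
(all relative to $\gG$), since both sides assert ``$\textnot a \confpath y \mid D$ for every $a\in A$ and $y\in B\cup C$''; the same equivalence holds verbatim with $\confarc$ in place of $\confpath$. Decomposition (iii) is the forward direction of this equivalence and composition (vii) the backward direction; triviality (i) is exactly the stated convention $\textnot A \confpath \emptyset\mid C$; and symmetry (ii) reduces to the vertex level, because reversing a path with two endpoint arrowheads and the same set of colliders is again such a path and ancestral blocking does not depend on the direction of traversal, so $\textnot a\confpath b\mid C \Leftrightarrow \textnot b\confpath a\mid C$.

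\textbf{Weak union (iv).} For $\mathcal{J}_{\confarc}$ this is immediate from monotonicity: by decomposition $\textnot A\confarc (B\cup C)\mid D$ gives $\textnot A\confarc B\mid D$, and since $D\subseteq C\cup D$, \Cref{lem:monotone} in contrapositive form (enlarging the conditioning set never unblocks a confounding arc) upgrades this to $\textnot A\confarc B\mid C\cup D$. For $\mathcal{J}_{\confpath}$ monotonicity is unavailable, so I would pass to the smallest marginal graph. Assuming for contradiction that $a\confpath b\mid C\cup D$ in $\gG$ for some $a\in A$, $b\in B$, \Cref{cor:preserve} places $a$ and $b$ in a common district of $\gG^\ast:=\gG(\{a,b\}\cup C\cup D)$, i.e.\ yields a bidirected path $a=v_0\bdedge v_1\bdedge\dots\bdedge v_k=b$ in $\gG^\ast$ with $v_1,\dots,v_{k-1}\in C\cup D$. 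Let $v_j$ be the first of $v_1,\dots,v_k$ lying outside $D$; it exists because $v_k=b\notin D$, and then $v_j\in\{b\}\cup C$. The initial segment $a\bdedge v_1\bdedge\dots\bdedge v_j$ is, inside $\gG^\ast$, an ancestrally unblocked confounding path from $a$ to $v_j$ given $D$: its only non‑endpoint vertices $v_1,\dots,v_{j-1}$ are colliders contained in $D\subseteq\overline{\an}(D)$, and it has no non‑colliders. Hence $a\confpath v_j\mid D$ in $\gG^\ast$, and by \Cref{thm:preserve} also in $\gG$. If $v_j=b$ this contradicts $\textnot A\confpath B\mid D$, and if $v_j\in C$ it contradicts $\textnot A\confpath C\mid D$; both follow from $\textnot A\confpath (B\cup C)\mid D$ by decomposition. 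I expect this to be the main obstacle: the failure of monotonicity in the conditioning set is precisely what makes weak union for $\confpath$ delicate, and tracking the bidirected path in $\gG^\ast$ until it first leaves $D$ is the device that makes the argument close.

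\textbf{Failure of contraction (v) and intersection (vi).} I would exhibit a single small witness that refutes both axioms for both relations. Let $\gG$ be the DAG on $\{a,b,c\}$ with edges $c\rdedge a$ and $c\rdedge b$, and take $A=\{a\}$, $B=\{b\}$, $C=\{c\}$, $D=\emptyset$. The only path between $a$ and $c$ is the single edge $c\rdedge a$, which has a tail at $c$, so it is not a confounding path; thus $\textnot a\confpath c\mid S$ (equally $\textnot a\confarc c\mid S$) for every $S$, which supplies both $\langle A,C\mid D\rangle$ and $\langle A,C\mid B\cup D\rangle$. The only path between $a$ and $b$ is the confounding arc $a\ldedge c\rdedge b$; conditioning on the non‑collider $c$ blocks it, so $\langle A,B\mid C\cup D\rangle$ holds, while conditioning on nothing leaves it unblocked, so $a\confpath b\mid\emptyset$ and therefore $\langle A,B\cup C\mid D\rangle$ fails. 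Feeding the premises $\langle A,C\mid D\rangle$ and $\langle A,B\mid C\cup D\rangle$ into (v), and $\langle A,B\mid C\cup D\rangle$ and $\langle A,C\mid B\cup D\rangle$ into (vi), both axioms fail; and since every confounding path in this graph is already a confounding arc, the same example works for $\mathcal{J}_{\confarc}$. In particular neither relation is a semi‑graphoid, which is the qualitative content of the theorem.
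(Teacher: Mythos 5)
Your proposal is correct and follows essentially the same route as the paper's proof: properties (i)--(iii) and (vii) from the definitions, weak union for $\mathcal{J}_{\confarc}$ via \Cref{lem:monotone}, weak union for $\mathcal{J}_{\confpath}$ by passing to the smallest marginal graph $\gG(\{a,b\}\cup C\cup D)$ and following the bidirected path until it first exits $D$, and the same counterexample $a \ldedge c \rdedge b$ for the failure of contraction and intersection. The only (cosmetic) difference is that you handle the weak-union case analysis uniformly via ``the first vertex outside $D$,'' whereas the paper splits into the cases $k=0$, all intermediate vertices in $D$, and first vertex in $C$.
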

\begin{proof}
Properties (i)--(iii) and (vii) directly follow from definition.

We claim that $\mathcal{J}_{\confarc}$ satisfies (iv) weak union. By
(iii), $\textnot A \confarc B, C \mid D$ implies $\textnot A \confarc
B \mid D$, which leads to $\textnot A \confarc B \mid C, D$ by \cref{lem:monotone}.

We claim that $\mathcal{J}_{\confpath}$  also satisfies (iv) weak union. We show its contraposition $A \confpath B \mid C, D \implies A \confpath B, C \mid D$. From $A \confpath B \mid C, D$, we have $a \confarc b \mid C, D$ for some $a \in A$ and some $b \in B$ in $\gG$. Let $\tilde{\gG} := \gG(\{a,b\} \cup C \cup D)$. By \cref{cor:preserve}, we have $a \samedist b \mid C, D$ in $\tilde{\gG}$. Let $\pi: a=v_0 \bdedge \dots \bdedge v_{k+1} = b$ with $k \geq 0$ be any $\samedist$ path between $a$ and $b$ in $\tilde{\gG}$ that is m-connected given $C, D$. If $k=0$, we have $a \bdedge b \mid C, D$ in $\tilde{\gG}$, which by \cref{cor:preserve} implies $a \confarc b \mid C, D$ in $\gG$. It further follows from \cref{lem:monotone} that $a \confarc b \mid D$ in $\gG$ and hence $A \confpath B \mid D$ in $\gG$.
Now we suppose $k \geq 1$. By definition of $\tilde{\gG}$, we have $v_1, \dots, v_k \in C \cup D$. If $v_1, \dots, v_k \in D$, then we have $a \samedist b \mid D$ and hence $a \confpath b \mid D$ in $\tilde{\gG}$. By \cref{thm:preserve}, it follows that $a \confpath b \mid D$ in $\gG$ and hence $A \confpath B \mid D$ in $\gG$.
Otherwise, let $1 \leq k' \leq k$ be the first vertex such that $v_{k'} \in C$. Observing that $a \bdedge \dots \bdedge v_{k'}$ is m-connected given $D$, we have $a \samedist v_{k'} \mid D$ in $\tilde{\gG}$ and hence $a \confpath v_{k'} \mid D$ in $\tilde{\gG}$. By \cref{thm:preserve}, it follows that $A \confpath C \mid D$ in $\gG$.

We show that $\mathcal{J}_{\confpath}$ does not satisfy (v)
contraction or (vi) intersection on some ADMG. Consider the simple
example $A \ldedge C \rdedge B$ and let $D = \emptyset$. Given
$\textnot A \confpath C \mid \emptyset$ and $\textnot A \confpath B
\mid C$, (v)
contraction would imply $\textnot A \confpath B,C \mid \emptyset$, which
fails here. Also, given $\textnot A \confpath B \mid C$ and $\textnot
A \confpath C \mid B$, (vi) intersection would imply $\textnot A
\confpath B,C \mid \emptyset$, which again fails. Similarly, because
$\mathcal{J}_{\confpath}$ implies $\mathcal{J}_{\confarc}$, (v)
contraction or (vi) intersection for $\mathcal{J}_{\confarc}$ would
imply $\textnot A \confarc B,C \mid \emptyset$, which fails. Hence, relation
$\mathcal{J}_{\confarc}$ also fails to satisfy (v) and (vi).
\end{proof}

\section{An example with unobserved variables} \label{sec:unobserved}
We illustrate \cref{alg:confound-select} with another DAG in
\Cref{fig:unobserved}, where $U_1$ and $U_2$ are
unobserved. For this case, $\{B,D\}$ is the only minimal adjustment
set. We use $\color{blue} \barrow$ to draw those edges in
$\mathcal{B}_y$ that are no longer expanded and use $\dbarrow$ to draw
those edges in $(\bar{S} \times \bar{S}) \setminus (\mathcal{B}_y \cup
\mathcal{B}_n)$ that can be further expanded. The min-cut strategy in
\cref{sec:choice-graph-expans} is adopted. The graph popped from the
queue is drawn with a box with its min-cut index marked in the upper
right corner. To improve the efficiency of adjustment, subroutine \textsc{SelectEdge} chooses an edge on a min-cut 
that is closest to $Y$; the edge $\pi$ chosen by \textsc{SelectEdge} is
highlighted with $\color{red} \dbarrow$. Suppose subroutine
\textsc{FindPrimary} returns all the minimal primary adjustment sets
(i.e., with option \texttt{minimal\_only=True}),
so that the algorithm is guaranteed to find $\{B,D\}$ by
\cref{thm:alg}.

\begin{figure}[!htb]
\centering
}^{\,\color{red} \infty} \;\Biggr],\\
\mathcal{R}&=\{\{B,C,D \}, \{B, D\} \}.
\end{align*}

\item Upon $\mathcal{Q} = []$, the algorithm terminates and returns
\[\mathcal{R}=\{\{B,C,D \}, \; \{B, D\}\},\]
which contains the only minimal sufficient adjustment set $\{B,D\}$.
\end{enumerate}
\clearpage{}

\end{document}